\documentclass[11pt]{article}


\newcommand{\eat}[1]{}

\usepackage{latexsym}
\usepackage{amsmath,amsthm}

\newtheorem{theorem}{Theorem}

\newtheorem{corollary}{Corollary}

\usepackage{booktabs}

\usepackage{tabulary}

\numberwithin{equation}{section}
\numberwithin{theorem}{section}

\RequirePackage{fix-cm}
\usepackage{breqn}
\usepackage{mdwtab}
\usepackage[ruled,lined]{algorithm2e}
\interdisplaylinepenalty=2500
\usepackage{setspace}
\usepackage{graphicx}
\usepackage{multirow}
\usepackage{array}
\newcolumntype{B}[1]{>{\raggedright\let\newline\\\arraybackslash\hspace{0pt}}m{
#1}}

\newcolumntype{L}[1]{>{\raggedright\let\newline\\\arraybackslash\hspace{0pt}}m{
#1}}
\newcolumntype{C}[1]{>{\centering\let\newline\\\arraybackslash\hspace{0pt}}m{#1}
}
\newcolumntype{R}[1]{>{\raggedleft\let\newline\\\arraybackslash\hspace{0pt}}m{#1
}}

\bibliographystyle{elsarticle-num}
 
\begin{document}

\title{Scheduling Under Power and Energy Constraints}

\author{Mohammed Haroon Dupty \hspace{0.5in} Pragati
  Agrawal \hspace{0.5in} Shrisha Rao \\
{\tt mohammed.haroon@iiitb.org} \hspace{0.1in} {\tt pragati.a.in@ieee.org} \hspace{0.1in} {\tt shrao@ieee.org}%
}
\date{}
\maketitle

\begin{abstract}
Given a system model where machines have distinct speeds and power
ratings but are otherwise compatible, we consider various problems of
scheduling under resource constraints on the system which place the
restriction that not all machines can be run at once.  These can be
power, energy, or makespan constraints on the system.  Given such
constraints, there are problems with divisible as well as
non-divisible jobs. In the setting where there is a constraint on
power, we show that the problem of minimizing makespan for a set of
divisible jobs is $\mathcal{NP}$-hard by reduction to the knapsack
problem. We then show that scheduling to minimize energy with power
constraints is also $\mathcal{NP}$-hard.  We then consider scheduling
with energy and makespan constraints with divisible jobs and show that
these can be solved in polynomial time, and the problems with
non-divisible jobs are $\mathcal{NP}$-hard. We give exact and
approximation algorithms for these problems respectively.
\end{abstract}

\noindent{\bf Keywords}: Scheduling, Power, Makespan, Energy,
$\mathcal{NP}$-hard, Computational complexity, Approximation
algorithms, Divisible jobs, Non-divisible jobs

\maketitle

\section{Introduction}
\label{sec:intro}

It is highly desirable that every system of machines be as efficient
as can be with regard to the resources that it consumes. This also
requires appropriate usage of machines within a larger system, in
addition to proper design and operation of individual machines for
good solo performance.  Of particular concern is the appropriate
scheduling of jobs over a system of machines.  Scheduling of jobs can
be considered effective when it meets stated goals such as minimal
makespan (time to completion), minimal energy consumed, and minimal
cost incurred.  Thus, it is necessary to attempt to create effective
scheduling algorithms in line with such desirable goals.

Additionally, however, it is also seen in practice that many if not
most systems also operate under constraints that may be externally or
internally imposed.  These may be power, energy, or makespan
constraints on the system.  Systems connected to smart power grids can
provide good examples for power-constrained scheduling problems. Smart
grids and usage characteristics such as demand-side
management~\cite{brennan2010} or demand response~\cite{SpeesLave2007}
necessitate that systems control their power usage to certain limits
lest there be penalties or other unwelcome consequences.  Such
concerns with power usage are of course prevalent in conventional
industrial systems~\cite{jessoe2013}, but also in large computing
systems~\cite{Ranganathan2010,akshay2013}.  Additionally, many
contemporary systems are powered by multiple sources, which almost
always have varying associated energy capacities, as in cases of
renewable sources like solar~\cite{lavania2011} and
wind~\cite{Apt2010}. Many performance optimization
problems~\cite{alenawy2005} for real-time systems rely on a fixed
energy budget during an operation.

Similarly, other problems where there are hard deadlines, may require
energy-minimal scheduling while respecting constraints on makespan.
Therefore, it is necessary to consider the problems of how scheduling
algorithms can be designed to be effective in terms of optimizing
multiple quality parameters such as makespan and energy consumption,
while running on systems subject to different kinds of resource
constraints.

We consider these problems in the present paper, with a model of
resource constrained systems whose machines are similar in their
capabilities but may have different working and idle power ratings and
working speeds.  Based on different types of machines cooperatively
running similar jobs, we classify and analyze the problems of resource
constrained offline scheduling with non-identical interconnected
machines and independent jobs.  Fundamentally, the jobs run on the
machines can themselves be either \emph{divisible}, i.e., can be
broken up in arbitrary ways into chunks of whatever sizes one may
wish, and \emph{non-divisible}, in which case the jobs cannot be
subdivided.  A little insight suggests that the problems of scheduling
non-divisible jobs are at least as hard as those of scheduling
divisible jobs.

The basic problem of minimizing the \emph{makespan} (time to
completion of a set of jobs) under a power constraint is seen
(Section~\ref{sec:makespan_power}) to be
$\mathcal{NP}$-hard~\cite{Garey1990} even for divisible jobs, by
reduction to the \emph{knapsack problem}~\cite{martelli}, a canonical
problem in complexity theory.  This implies that minimal-makespan
scheduling is hard also for non-divisible jobs.

Next, we consider power constrained energy-minimal scheduling, which
is known \cite{pagrawal2014} to be at least as hard as
minimum-makespan scheduling.  Quite obviously, this too turns out to
be $\mathcal{NP}$-hard (Section~\ref{sec:energy_power}).

We then consider the problem of scheduling to minimize the makespan
given energy budget. Unlike other power-constrained scheduling
problems, this problem turns out to be solvable in polynomial time for
divisible jobs.  For non-divisible jobs, the scheduling problem to
minimize makespan is known to be $\mathcal{NP}$-hard even without any
constraints.  So with energy constraints, it should be
$\mathcal{NP}$-hard as well for non-divisible jobs
(Section~\ref{sec:makespan_energy}).

Last, we consider the problem of energy minimal scheduling under
makespan constraint.  Like the problem of minimal-makespan scheduling
under an energy constraint, this too turns out to be solvable in
polynomial time for divisible jobs and $\mathcal{NP}$-hard for
non-divisible jobs (Section~\ref{sec:energy_makespan}).

These results suggest that many other interesting problems of
effective scheduling under pairs of various resource constraints are
computationally intractable, i.e., unlikely to have optimal general
solutions that can be computed efficiently.  However, while we are
essentially presenting negative results, we believe that doing so
opens up the possibility for focused approaches to finding
approximation algorithms and other solutions to such problems.

We give approximation algorithms for divisible jobs given a power
constraint: for minimizing makespan
(Section~\ref{sec:makespan_power}), and for minimizing energy
consumption (Section~\ref{sec:energy_power}).  Each algorithm is seen
to have a bound of ($1+\epsilon)OPT$.  We then give exact algorithms
for minimal makespan and energy scheduling under respective
constraints of energy and makespan for divisible jobs; and
approximation algorithms for non-divisible jobs
(Section~\ref{sec:makespan_energy} and
Section~\ref{sec:energy_makespan}).

Our basic model is general (not domain-specific) and therefore
extensible for more specific classes of systems.  Our analyses with
respect to power and energy can also be translated for systems that
consume any other resources (e.g., water, fuel), and are subject to
similar constraints from sources supplying the same.

General literature on
scheduling~\cite{Herrmann2006,pinedo2012scheduling} has considered a
multitude of objectives related to time, but has not considered energy
and power in the general setting.  It is well known that energy costs
are the primary costs in some large systems such as data
centers~\cite{datacenter2,datacenter1}, and that the energy consumed
by idle machines in such systems is a significant part of their
overall energy consumption, so the lack is significant.  Even the
literature on scheduling for meeting more than one goal, which is
called multi-objective
scheduling~\cite{drozdowski2014,lee2014,shi2012}, does not address
these issues.  Some domain-specific literature~\cite{ZomayaLee2012}
focuses on energy optimality, but calls on particular features and
technologies of those domains (such as DVFS), and does not consider
the limits on scheduling due to power, energy or makespan.

In the next section we formally introduce our system model along with
the notations used in this paper.

\section{System Model} \label{sec:sm}

Consider $m$ machines $c_1$ to $c_m$ forming a system $\mathcal{C}$.
The working power of machine $c_i$ is denoted as $\mu(c_i)$, and the
idle power as $\gamma(c_i)$.  The sum of the idle powers of all the
machines is given by $\Gamma$.  The speed of machine $c_i$ is denoted
as $\upsilon(c_i)$.  The speed (throughput of work per unit time) of a
machine is fixed throughout its working tenure, and all machines
process identical jobs (so that any job can be run on any machine).
 
When $\upsilon(c_i) = 1$, with $1 \leq i \leq m$, then we can say that
the machines are identical in their working capacities or speeds,
implying that they can execute and complete any job given to them in
equal time.  And if in that case all jobs are executed sequentially
on one machine, then that time taken is $W$.

All machines stay on for the duration of the makespan of the whole set
of jobs.  (This is reasonable considering that in many systems the
cycle time to stop/restart a machine is large; however, it is not
a restrictive assumption, as stopping idle machines is equivalent to 
the idle power of those machines being zero.)

All machines in the system work in parallel, and the maximum working
time of the system to execute a given set of jobs is $T$, which is the
makespan of the system for that set of jobs.  All jobs are
independent, meaning a job need not wait for completion of any other
particular job to start its execution.  This implies that
\begin{equation}
 T = \max {t_i}, \, \forall i, 1 \leq i \leq m
 \label{label:tmax}
\end{equation}

If machine $c_i$ works only for time $t_i$, then the idle time of
machine $c_i$ is given by $T-t_i$.  The amount of work done by machine
$c_i$ is represented by $w(c_i)$.
\begin{equation}
 w(c_i) = t_i \upsilon(c_i)
\label{label:w}
\end{equation}

The sum of the work done by all machines is equal to the total work to
be done, i.e., $\sum_{i=1}^m w(c_i) = W$. $E$ represents the energy
consumption of the complete system.

Since we consider the energy consumption in working as well as idle
states, the energy consumed by machine $c_i$ is the sum of the energy
consumed in the working state and that consumed in the idle state.
The energy consumed by machine $c_i$ in the working state is given by
$\mu(c_i){t_i}$, and the energy consumed in the idle state by
$\gamma(c_i)(T-t_i)$, so the total energy consumed by $c_i$ is given
by $\mu(c_i)\tau(c_i) + \gamma(c_i)(\kappa(c_i))$.

Overall, for the entire system $\mathcal{C}$, we get, after
simplification:
\begin{equation}
 E = \sum_{i = 1}^{m} \left[\frac{w(c_i)}{\upsilon(c_i)}(\mu(c_i) - 
\gamma(c_i)) + 
\gamma(c_i) T\right]
\label{label:energy}
\end{equation}
where $1 \leq i \leq m$. 

The system $\mathcal{C}$ can have constraints like power budget $
{P}$, , energy budget $ {E}$ and makespan budget $ {T}$. $ {P}$ and $
{E}$ are the maximum power and energy that can be drawn altogether by
all the machines of the system.  $ {T}$ is the maximum time the system
is on.

Evidently, we have to assume that $ {P} > \Gamma + \min(\mu(c_i) -
\gamma(c_i))$, i.e., the power is enough to keep all the machines idle
and let at least one machine run a job.  It is likewise necessary to
assume that $ {P} < \sum_i^m \mu(c_i)$, i.e., that the power budget is
not sufficient to run all machines at once. (For it is sufficient,
then the availability of power is no longer a constraint on
scheduling). Similar assumptions are to be made for other constraints
also.

In considering the types of jobs to be executed by the system of
machines, the simplest kind are divisible jobs, which can be divided
in arbitrary ways, with it being possible to run chunks of any size on
any machine~\cite{bharadwaj2003}.  This is obviously something of an
abstraction, but is satisfied to an extent in practice with such tasks
as pumping water.  The other type of jobs are non-divisible jobs,
which come in fixed-size chunks that cannot be divided in arbitrary
ways.

It is of interest to note that scheduling of non-divisible jobs is
certainly no easier than the scheduling of divisible jobs; if a set of
non-divisible jobs can be scheduled effectively (by whatever measure
of effectiveness one may choose to apply in a certain context), then
an equivalent set of divisible jobs can also be scheduled effectively
by dividing them into the same chunk sizes as those of the
non-divisible jobs.  Therefore, results about divisible jobs set a
baseline of difficulty for all jobs; if a certain class of problems is
intractable when the jobs are divisible, it is also that way with
non-divisible jobs~\cite{pinedo2012scheduling}.

We formulate the following problems:
\begin{enumerate}
\item Scheduling to minimize makespan given a constraint on power 
(Section~\ref{sec:makespan_power}).
\item Scheduling to minimize energy given a constraint on power 
(Section~\ref{sec:energy_power}).
\item Scheduling to minimize makespan given a constraint on energy 
(Section~\ref{sec:makespan_energy}).
\item Scheduling to minimize energy given a constraint on makespan 
(Section~\ref{sec:energy_makespan}).
\end{enumerate}

We explore the first two problems with divisible jobs only since these
are $\mathcal{NP}$-hard even for divisible jobs. For the next two
problems we consider solutions for both divisible and non-divisible
jobs.  A short summary of the results is given below along with bounds
for the algorithms.

\begin{table*}[htbp]
\renewcommand{\arraystretch}{1.5}
\caption{Summary of results.}
\label{res}
\centering
\scalebox{1}{
\begin{tabular}{p{3.8cm}  p{1.5cm}  p{1.51cm}  p{1.8cm} p{1.53cm} p{1.187cm} }
\toprule
 \textbf{Problem}  & \textbf{Type of Job} & \textbf{Hardness} & 
\textbf{Type of Algorithm} & \textbf{Time Complexity} & \textbf{Bound} \\ 
\midrule
 Minimize makespan given constraint on power & Divisible & 
$\mathcal{NP}$-hard & 
Approximate 
& $\mathcal{O}(m^2 \lfloor \frac{m}{\epsilon}\rfloor) $ & (1$+ \epsilon$ ) \\
\midrule
 Minimize energy given constraint on power & Divisible & 
$\mathcal{NP}$-hard & 
Approximate & 
$\mathcal{O}(m\log m)$ & 2 \\
\midrule
 \multirow{2}{*}{\parbox {4cm}{Minimize makespan given constraint on 
energy}} & 
Divisible & Polynomial time & Exact & $\mathcal{O}(m\log m)$ & -\\
\cline{2-6}
 & Non-divisible & $\mathcal{NP}$-hard & Approximate & 
$\mathcal{O}(m^2n)$ & $\frac{19}{12}+ \epsilon$ 
\\
\midrule
 \multirow{2}{*}{\parbox{4cm}{Minimize energy given constraint on 
makespan}} & 
Divisible & Polynomial time & Exact & $\mathcal{O}(m\log m)$ & -\\
\cline{2-6}
 & Non-divisible & $\mathcal{NP}$-hard & Approximate & 
$\mathcal{O}(mn)$ & 
$1+\frac{\eta_{max}}{\eta_{min}}$ \\
\bottomrule
\end{tabular}}
\end{table*}

\section{Scheduling to Minimize Makespan Under Limited Power} 
\label{sec:makespan_power}


The first problem we consider is to minimize the makespan (time to
completion of a set of jobs) of system $\mathcal{C}$, while under a
power constraint.

\begin{theorem} \label{theorem1}
If the power budget for a system of machines is limited, minimizing
the makespan for a set of divisible jobs is an $\mathcal{NP}$-hard problem.
\end{theorem}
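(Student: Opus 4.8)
The plan is to establish $\mathcal{NP}$-hardness by a polynomial-time reduction from the (decision version of the) \textsc{Knapsack} problem to the decision version of this scheduling problem, namely: given a target $\tau$, does there exist a power-feasible schedule of the divisible jobs with makespan at most $\tau$? Since \textsc{Knapsack} is $\mathcal{NP}$-hard, hardness of the scheduling problem follows, and in view of the remark in Section~\ref{sec:sm} that divisible jobs are a lower bound on difficulty, hardness carries over to non-divisible jobs as well.

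The first step is a structural lemma: for divisible jobs under a power budget $P$, the minimum achievable makespan equals $W / R^{\ast}$, where
\[
R^{\ast} \;=\; \max\Bigl\{\, \textstyle\sum_{i \in S} \upsilon(c_i) \;:\; S \subseteq \{1,\dots,m\},\ \textstyle\sum_{i \in S}\bigl(\mu(c_i) - \gamma(c_i)\bigr) \le P - \Gamma \,\Bigr\}.
\]
The reasoning is that at any instant the set $S(t)$ of machines actually doing work draws total power $\Gamma + \sum_{i \in S(t)}(\mu(c_i) - \gamma(c_i))$, which must be at most $P$; hence the instantaneous rate of work $\sum_{i \in S(t)} \upsilon(c_i)$ never exceeds $R^{\ast}$, so any schedule completing $W$ units of work needs time at least $W / R^{\ast}$. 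Conversely, running exactly the maximizing subset $S^{\ast}$ throughout $[0, W/R^{\ast}]$ is power-feasible, and divisibility lets us split the $W$ units among the machines of $S^{\ast}$ in proportion to their speeds, completing all the work. Thus time-sharing among several subsets cannot beat a single best subset.

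The second step observes that $R^{\ast}$ is exactly a $0$-$1$ knapsack optimum: items $i = 1,\dots,m$ with weights $\mu(c_i) - \gamma(c_i)$, profits $\upsilon(c_i)$, and capacity $P - \Gamma$. Given an arbitrary knapsack instance with weights $w_1,\dots,w_n$, profits $v_1,\dots,v_n$, capacity $B$ and target profit $V$, I construct a system of $m = n$ machines with $\gamma(c_i) = 0$, $\mu(c_i) = w_i$, $\upsilon(c_i) = v_i$, total work $W = 1$, power budget $P = B$, and ask whether the minimum makespan is at most $\tau = 1/V$. By the lemma this holds iff $R^{\ast} \ge V$, i.e.\ iff some subset of items has total weight at most $B$ and total profit at least $V$ --- precisely the knapsack question. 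Standard preprocessing (discard any item with $w_i > B$, since it lies in no feasible solution) together with a harmless perturbation of $B$ makes the constructed instance satisfy the model's feasibility assumptions $P > \Gamma + \min_i(\mu(c_i) - \gamma(c_i))$ and $P < \sum_i \mu(c_i)$, which otherwise only exclude trivial knapsack instances.

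The main obstacle is the structural lemma: one must rule out that a clever staggering of machines --- idling some while others run and swapping roles later --- could push the makespan below $W/R^{\ast}$. The argument above closes this gap because the power constraint is pointwise in time, so $\sum_{i\in S(t)}\upsilon(c_i) \le R^{\ast}$ holds at \emph{every} instant regardless of the schedule's structure, and integrating this bound over time yields the lower bound $W/R^{\ast}$ immediately. The remaining pieces --- verifying the feasibility assumptions on the constructed instance and checking that the construction is polynomial --- are routine.
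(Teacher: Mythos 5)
Your proof is correct, and it is in fact more careful than the paper's own argument, which is terser and glosses over two points that you handle explicitly. First, the paper simply asserts that minimizing makespan ``amounts to'' choosing the subset of machines with the greatest cumulative speed subject to the power budget; your structural lemma ($T_{\min} = W/R^{\ast}$, proved by the pointwise-in-time power bound plus integration) is exactly the justification needed to rule out time-shared schedules that swap machines in and out, and the paper never supplies it. Second, the paper phrases its argument as a reduction \emph{to} the knapsack problem --- it exhibits the scheduling problem as an instance of knapsack --- which, read literally, establishes only an upper bound on difficulty; what hardness requires is a reduction \emph{from} knapsack, which is what you construct (weights $w_i \mapsto \mu(c_i)$ with $\gamma(c_i)=0$, profits $v_i \mapsto \upsilon(c_i)$, capacity $B \mapsto P$, target makespan $1/V$). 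The two directions coincide here because the correspondence is really an equivalence of problems, so the paper's intent is recoverable, but your version is the one that stands on its own. Your attention to the model's feasibility assumptions ($P > \Gamma + \min_i(\mu(c_i)-\gamma(c_i))$ and $P < \sum_i \mu(c_i)$) is a further refinement the paper does not address; as you note, the excluded instances are exactly the trivial knapsack instances, so nothing is lost.
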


\begin{proof}
Obviously, to minimize the makespan, we need to choose a subset of
machines from $\mathcal{C}$ to achieve the highest cumulative speed,
given the power constraint.  This amounts to finding some $r < m$
indices $i_r$ (the indices of the machines chosen to run) such that
\(\sum^r \upsilon(c_i)\) is the greatest possible.  In other words, we
wish to maximize \(\sum^r \upsilon(c_{i_r})\), subject to the constraint:
\begin{equation}
 \sum^r (\mu(c_{i_r}) - \gamma(c_{i_r})) + \Gamma \leq  {P}
\end{equation}
$\Gamma$ can be moved to the right so that $ {P} - \Gamma$ is
treated as one constant, say $Z$, and \(\mu(c_{i_r}) -
\gamma(c_{i_r})\) can be likewise treated as one variable, say
$d_{i_r}$.  Then the problem reduces to: maximize \(\sum
\upsilon(c_{i_r})\) subject to:
\begin{equation}
\sum^r d_{i_r} < Z 
\end{equation}
which is an instance of the knapsack problem~\cite{Garey1990}, a
canonical $\mathcal{NP}$-hard problem.
\end{proof}

It may be noted that this is a somewhat surprising result, because
minimum-makespan scheduling of divisible jobs in a system without a
power constraint is trivial---one has to just run all machines.

Given the lack of ease, as previously discussed, of scheduling
non-divisible jobs, the following obtains.

\begin{corollary} \label{corollary1}
If the power budget for a system of machines is limited, minimizing
the makespan for a set of non-divisible jobs is an $\mathcal{NP}$-hard
problem.
\end{corollary}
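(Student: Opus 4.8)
The plan is to read off Corollary~\ref{corollary1} from Theorem~\ref{theorem1} via the observation recorded in Section~\ref{sec:sm}: scheduling non-divisible jobs is never easier than scheduling divisible jobs, since any algorithm that effectively schedules non-divisible jobs under a power budget also effectively schedules the divisible workload obtained by cutting it into matching chunk sizes. Hence a polynomial-time algorithm for minimum-makespan scheduling of non-divisible jobs under a power constraint would yield one for the divisible version, contradicting Theorem~\ref{theorem1} unless $\mathcal{P}=\mathcal{NP}$.

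To make this concrete I would exhibit an explicit reduction, reusing the system built in the proof of Theorem~\ref{theorem1}, where the knapsack items become machines, machine $i$ carrying ``size'' $\mu(c_i)-\gamma(c_i)$ and ``value'' $\upsilon(c_i)$ with capacity $P-\Gamma$, and the optimal \emph{divisible} makespan of a workload $W$ equals $W/V^{*}$ with $V^{*}$ the largest cumulative speed of a power-feasible subset. The steps would be: (1) replace the divisible workload by $n$ identical non-divisible jobs of total size $W$; (2) note that in \emph{any} schedule the machines running at a given instant form a power-feasible subset, so instantaneous throughput is at most $V^{*}$ and therefore the makespan is at least $W/V^{*}$, with no appeal to divisibility; (3) show that distributing the $n$ jobs over a maximum-speed feasible subset in proportion to the machines' speeds, rounding the per-machine counts to integers, achieves makespan $W/V^{*}+\mathcal{O}(W/(n\,\upsilon_{\min}))$; (4) choose $n$ so that this additive slack falls below the smallest gap between distinct attainable cumulative speeds, so that the optimal makespan pins down $V^{*}$ and hence answers the knapsack decision question underlying Theorem~\ref{theorem1}.

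The main obstacle is keeping step (4) polynomial: because the numbers in a knapsack instance may be large, the relevant gap can be tiny and force $n$ to be huge, so one must either have Theorem~\ref{theorem1}'s construction start from a strongly $\mathcal{NP}$-hard problem with polynomially bounded numbers (bin packing, or 3-partition), for which a polynomial $n$ suffices, or sidestep the granularity issue altogether with a self-contained reduction from bin packing: take $m$ identical unit-speed machines whose idle and working powers are chosen so that the budget admits running exactly $k$ of them at once, feed the $n$ bin-packing items as non-divisible jobs, and observe that the minimum makespan then coincides with that of the $k$-machine problem, which is at most the bin capacity iff the items fit into $k$ bins. Everything else is routine bookkeeping, and the point to stress is that once the power budget restricts which machines may run, indivisibility alone already makes minimum-makespan scheduling $\mathcal{NP}$-hard, while the very same instances are trivial when the jobs are divisible.
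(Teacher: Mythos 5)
Your opening paragraph is, almost verbatim, the paper's entire proof: Corollary~\ref{corollary1} is justified in the text by a single sentence appealing to the observation in Section~\ref{sec:sm} that non-divisible jobs are never easier to schedule than divisible ones, so the hardness of Theorem~\ref{theorem1} is simply inherited. Where you diverge is in refusing to leave that observation informal, and this is to your credit. The paper's remark establishes only that the non-divisible optimum is no better than the divisible optimum for a matching workload; it does not by itself exhibit a polynomial-time reduction from the (hard) divisible problem to the non-divisible one, which is what $\mathcal{NP}$-hardness of the latter actually requires. You correctly identify the obstruction to making that reduction explicit --- discretizing the workload finely enough to recover the optimal feasible speed $V^{*}$ from the non-divisible makespan may require super-polynomially many unit jobs when the knapsack data are large --- and both of your repairs are sound: sourcing the reduction from a strongly $\mathcal{NP}$-hard problem with polynomially bounded numbers, or the self-contained reduction from bin packing on identical unit-speed machines whose power budget admits exactly $k$ simultaneous workers, so that the minimum makespan is at most the bin capacity iff the items fit into $k$ bins. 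The latter is arguably the cleaner proof of the corollary, since it produces instances that are trivial when the jobs are divisible yet hard when they are not, thereby isolating indivisibility (rather than subset selection) as a second, independent source of hardness. In short, the paper buys brevity at the cost of a genuine logical gap, and your version supplies the missing reduction; nothing in your proposal is wrong.
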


Therefore, we may generally say that given a limited power budget,
scheduling jobs on a system is an $\mathcal{NP}$-hard problem.

\subsection*{Approximation Algorithm}

We give the following approximation algorithm for the objective of
minimizing makespan.  Without loss of generality, we assume the
working and idle power consumptions, and the speeds of machines to be
integers.

As seen in the previous section, the problem of minimizing makespan,
given a constraint on power can be formulated as a knapsack problem.
Here we minimize $T$ or maximize $\frac{1}{T}$.
\begin{equation}
\begin{split}
\text{maximize} & {\sum}^r \upsilon(c_{i_r})\\
\text{subject to}
&\sum^r (\mu(c_{i_r}) - \gamma(c_{i_r}))  \leq  {P} - \Gamma
\end{split}
\end{equation}

Considering $\upsilon(c_{i})$ as profit for component $i$ (machine
$c_i$) and $(\mu(c_{i}) - \gamma(c_{i}))$ as weight of component $i$
(machine $c_i$), we get the classical knapsack formulation. Hence we
can apply approximation algorithm from the large set of algorithms
developed for knapsack problem. \eat{ We present an approximation
algorithm~\ref{makespan_power_algo} which is easy to implement and has
worst-case bound of $\frac{1}{2}$ when maximizing $\frac{1}{T}$, which
comes to a bound of 2 when maximizing $T$.

\begin{algorithm}[htbp]
\LinesNumbered
\SetAlgoLined
\SetKwInOut{Input}{input}\SetKwInOut{Output}{output}
\Input{Set of machines ($\mathcal{C}$), number of machines ($m$), working power 
of machines ($\mu(c_i)$), idle 
power of machines ($\gamma(c_i)$), sum of idle 
power of all the machines ($\Gamma$), speed of machines ($\upsilon(c_i)$),
power constraint ($ {P}$)}
\Output{Working set ($\mathcal{R}$)}
\BlankLine
\For{$i = 1$ to $m$}{
calculate $\frac{\upsilon(c_i)}{\mu(c_i) - \gamma(c_i)}$
}
max-sort $(\frac{\upsilon(c_i)}{\mu(c_i) - \gamma(c_i)})$\;
$\alpha \leftarrow \Gamma$\;
$\mathcal{R} \leftarrow \emptyset$\;
$\mathcal{A} \leftarrow \mathcal{C}$\;
\For{$i = 1$ to $m$}{
\eIf{$\alpha <  {P}$}
{$margin \leftarrow  {P} -\alpha$\;
   $p_i \leftarrow \mu(c_i) - \gamma(c_i)$\;
   \eIf{ $p_i \leq margin$}{
   $\mathcal{R} \leftarrow \mathcal{R} + \{c_i\}$\;
   }{
   \textbf{if} $(p_i \leq \alpha) \wedge (\sum_{i \in \mathcal{R}} 
\upsilon(c_{i-1}) < \upsilon(c_i) )$ \textbf{then} \\
      $\mathcal{R} \leftarrow {c_i}$\;
      
}
$\mathcal{A} \leftarrow \mathcal{C} - \mathcal{R}$\;
   $\alpha \leftarrow \Gamma + \sum_{i \in \mathcal{R}}(\mu(c_i) - 
\gamma(c_i))$\;
 }{stop\;}
}
  working set $\leftarrow \mathcal{R}$\;
\caption{Approximation algorithm for reducing makespan subject to power 
constraint with divisible jobs}
\label{makespan_power_algo}
\end{algorithm}

The time complexity of Algorithm~\ref{makespan_power_algo} is
$\mathcal{O}(m\log{}m)$. 

In Algorithm~\ref{makespan_power_algo}, we sort the machines in
decreasing order of their profit by weight ratio.  Then we check the
power consumption of machines sequentially: if the power of some
machine(s) is (are) within the constraint then, we add said machine(s)
to our working set $\mathcal{R}$.  But if the power consumption of a
machine is not within the margin, then we compare the profit (sum of
speeds) of machines in current working set with the marginal profit of
this particular machine.  If the sum of speeds of machines in
$\mathcal{R}$ is more than the speed of particular machine, then we
continue with the same set, else we choose only that particular
machine and update $\mathcal{R}$.

\begin{theorem} \label{theorem2}
The worst case bound of Algorithm~\ref{makespan_power_algo} to maximize
$\frac{1}{T}$ is $\frac{1}{2}$, so the makespan achieved is at most
twice the optimal.
\end{theorem}

\begin{proof}
Since we are taking machines in working set according to their profit
per unit weight, the optimal profit can be simply attained if
\begin{equation}
\sum^r (\mu(c_{i_r}) - \gamma(c_{i_r}))  =  {P} - \Gamma
\end{equation}
For the worst case, suppose there are only two machines in the
system. Taking $\upsilon(c_{1}) = (\mu(c_{1}) - \gamma(c_{1})) =
\upsilon(c_{2}) = (\mu(c_{2}) - \gamma(c_{2})) = k$ and
$ {P} - \Gamma = 2k - 1$.  As the bound is given by current 
performance divided by optimal performance. The lower bound $L$, is given by,
\begin{equation}
L = \frac{k}{2k - 1}
\end{equation}
 $L \approx \frac{1}{2}$.
\end{proof}

Hence the worst-case bound on makespan comes to $2$.}
\eat{
\subsection{Fully Polynomial Time Approximation Scheme (FPTAS) for Minimizing 
Makespan}
 Here we have to minimize $ {T}$ or maximize $\frac{1}{T}$.
\begin{equation}
\begin{split}
\begin{array}{ll@{}ll}
\text{maximize} &{\sum}^r \upsilon(c_{i_r})\\
\text{subject to}
&\sum^r (\mu(c_{i_r}) - \gamma(c_{i_r}))  \leq  {P} - \Gamma
\end{array}
\end{split}
\end{equation}
}
We use a suitably modified version of the well known fully polynomial
time approximation scheme (FPTAS) for the knapsack
problem~\cite{Ibarra1975}.  Let $V$ be the speed of the fastest
machine, i.e., $V = \max_i \upsilon(c_i)$.  Then $mV$ is a trivial
upper-bound on the speed that can be achieved by any solution.  For
each $i \in \{1, \ldots, m\}$ and $\upsilon \in \{ 1, \ldots, mV\}$,
let $S_{i,\upsilon}$ denote a subset of $\{c_1, \ldots, c_i\}$ whose
total speed is exactly $\upsilon$ and whose total power requirement is
minimized. Let $A(i,\upsilon)$ denote the total power requirement of
the set $S_{i,\upsilon}$ ($A(i,\upsilon) = \infty$ if no such set
exists). Clearly $A(1, \upsilon)$ is known for every $ \upsilon \in \{
1, \ldots, mV\}$. The following recurrence helps compute all values
$A(i,\upsilon)$ in $\mathcal{O}(m^2V)$ time.
\begin{equation*}
A(i+1,\upsilon) =
\end{equation*}
\footnotesize
\begin{equation*}
    \begin{cases}
        \min\{A(i, \upsilon), (\mu(c_{i+1})-\gamma(c_{i+1}))+A(i, 
\upsilon-\upsilon(c_{i+1}))\}, & 
\text{if $\upsilon(c_{i+1}) < \upsilon$ }\\
       A(i+1, \upsilon)=A(i, \upsilon), & \text{otherwise}
        \end{cases}
\end{equation*}

\normalsize
The maximum speed achievable by the machines with total power bounded
by P is $\max\{\upsilon| A(m, \upsilon)\leq P\}$. We thus get a
pseudo-polynomial algorithm for minimizing makespan under power
constraints.

If the speeds of machines were small numbers, i.e., they were bounded
by a polynomial in $m$, then this would be a regular polynomial time
algorithm. To obtain a FPTAS we ignore a certain number of least
significant bits of speeds of machines (depending on the error
parameter $\epsilon$), so that the modified speeds can be viewed as
numbers bounded by a polynomial in $m$ and $\frac{1}{\epsilon}$. This
will enable us to find a solution whose speed is at least
($1-\epsilon)OPT$ in time bounded by a polynomial in $m$ and
$\frac{1}{\epsilon}$.

\begin{algorithm}[htbp]
\LinesNumbered
\SetAlgoLined
\SetKwInOut{Input}{input}\SetKwInOut{Output}{output}
\Input{Set of machines ($\mathcal{C}$), number of machines ($m$), working power 
of machines ($\mu(c_i)$), idle 
power of machines ($\gamma(c_i)$), sum of idle 
power of all the machines ($\Gamma$), speed of machines ($\upsilon(c_i)$),
power constraint ($ {P}$)}
\Output{Working set ($\mathcal{R}$)}
$V \leftarrow \max_{\forall i \epsilon \{1\dot{...}m\}} \upsilon(c_i)$\\
Given $\epsilon > 0$, let $K$ = $\frac{\epsilon V}{m}$.\\
For each machine $c_i$ define $\upsilon'(c_i) = \lfloor 
\frac{\upsilon(c_i)}{K}\rfloor$.\\
With these as speeds of machines, using the dynamic programming algorithm, find 
the set of machines with maximum total speed, say $\mathcal{R}$.\\
Return $\mathcal{R}$. 
\caption{FPTAS for makespan with divisible jobs}
\label{makespan_power_algo_fptas}
\end{algorithm}

\begin{theorem}\label{theorem_3}
If set $\mathcal{R}$ is output by the 
algorithm~\ref{makespan_power_algo_fptas} and $\upsilon(\mathcal{R})$
denotes ${\sum}_{\forall i \in \mathcal{R}} \upsilon(c_{i})$  then,
$\upsilon(\mathcal{R}) \geq (1-\epsilon)OPT$.
\end{theorem}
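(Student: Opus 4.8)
The plan is to follow the standard rounding analysis for the knapsack FPTAS, adapted to our "speeds as profits, power differences as weights" encoding. Let $\mathcal{O}$ denote an optimal working set, so that $\upsilon(\mathcal{O}) = \mathrm{OPT}$, and let $\mathcal{R}$ be the set returned by Algorithm~\ref{makespan_power_algo_fptas}. The key observation is that $\mathcal{R}$ and $\mathcal{O}$ use the \emph{same} weights (the power differences $\mu(c_i)-\gamma(c_i)$ are untouched by the rounding), so any set feasible for the original power budget is feasible for the rounded instance and vice versa; in particular $\mathcal{O}$ is a candidate in the dynamic program run on the scaled speeds $\upsilon'(c_i) = \lfloor \upsilon(c_i)/K\rfloor$. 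Since the DP returns a set of maximum total \emph{scaled} speed among all feasible sets, we get $\sum_{i\in\mathcal{R}} \upsilon'(c_i) \ge \sum_{i\in\mathcal{O}} \upsilon'(c_i)$.

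Next I would convert back from scaled speeds to true speeds using the elementary bounds $K\,\upsilon'(c_i) \le \upsilon(c_i) \le K\,\upsilon'(c_i) + K$, i.e. $\upsilon(c_i) - K \le K\upsilon'(c_i) \le \upsilon(c_i)$. Summing the left inequality over $\mathcal{R}$ and the right over $\mathcal{O}$, and chaining through the DP optimality just established,
\begin{equation}
\upsilon(\mathcal{R}) \;\ge\; K\sum_{i\in\mathcal{R}}\upsilon'(c_i) \;\ge\; K\sum_{i\in\mathcal{O}}\upsilon'(c_i) \;\ge\; \sum_{i\in\mathcal{O}}\bigl(\upsilon(c_i) - K\bigr) \;=\; \mathrm{OPT} - |\mathcal{O}|\,K .
\end{equation}
Since $|\mathcal{O}| \le m$ and $K = \epsilon V/m$, the error term is at most $mK = \epsilon V$. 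Finally, because at least one machine is used and each machine has speed at most $V$, the fastest machine alone gives $\mathrm{OPT} \ge V$, hence $\epsilon V \le \epsilon\,\mathrm{OPT}$, yielding $\upsilon(\mathcal{R}) \ge \mathrm{OPT} - \epsilon\,\mathrm{OPT} = (1-\epsilon)\mathrm{OPT}$, as claimed.

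The only real subtlety — the step I expect to need the most care — is justifying that the feasible region is genuinely unchanged by the rounding, so that $\mathcal{O}$ really is available to the DP and the DP's scaled-optimal set $\mathcal{R}$ is in turn feasible for the original budget $P$. This hinges on the fact that the rounding touches only the objective coefficients (speeds), never the constraint coefficients (the power differences) or the budget $P-\Gamma$; once that is spelled out, the chain of inequalities above is routine. (One should also note the harmless edge case where the scaled speed of some machine is $0$: such a machine contributes nothing to the scaled objective but its weight still counts, which only shrinks the DP's feasible candidates in a way that never hurts the bound, since the optimal set's machines all have $\upsilon(c_i)\ge 1$ and the loss is already absorbed into the $|\mathcal{O}|K$ term.)
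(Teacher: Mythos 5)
Your proof is correct and follows essentially the same route as the paper's: both use the standard knapsack-FPTAS rounding argument, comparing the DP's output to the optimal set under the scaled profits via $\upsilon(\mathcal{R}) \geq K\sum_{i\in\mathcal{R}}\upsilon'(c_i) \geq K\sum_{i\in\mathcal{O}}\upsilon'(c_i) \geq \mathrm{OPT} - mK = \mathrm{OPT} - \epsilon V \geq (1-\epsilon)\mathrm{OPT}$. Your version is in fact slightly more careful than the paper's (which conflates $\upsilon$ and $\upsilon'$ notationally), particularly in spelling out that the rounding leaves the feasible region unchanged.
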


\begin{proof}
Let ${O}$ denote the optimal set. For any machine $c_i$, because of rounding 
down, 
$K\upsilon(c_i)$ can be smaller than $\upsilon(c_i)$, but by not more than $K$. 
Therefore,
\begin{equation*}
\upsilon(O)-K\upsilon(O)\leq mK.
\end{equation*}
The dynamic programming step must return a set at least as good as $O$ under 
the 
new profits. Therefore, 
\begin{equation*}
\upsilon(\mathcal{R})\geq K\upsilon(O)\geq \upsilon(O)-mK = 
OPT-\epsilon V \geq (1-\epsilon)OPT, 
\end{equation*}
where the last inequality follows from the observation that $OPT\geq V$. It 
directly follows that minimum makespan $T \leq (1+\epsilon)OPT$.
\end{proof}

\section{Scheduling to Minimize Energy Under Limited Power} 
\label{sec:energy_power}

Minimizing the makespan for a set of jobs running on a system does
not guarantee that the energy consumed is minimized.  To see why, we
can compare the problems as done in prior work~\cite{pagrawal2014}.

If the idle power of every machine is equal to its working power,
i.e., if $\mu(c_i) = \gamma(c_i)$, we have the following
from~\eqref{label:energy}:
\begin{align}
  E &= \sum_{i = 1}^{m} [\mu(c_i)\tau(c_i) + \gamma(c_i)(T-\tau(c_i))] \notag \\
    &= \sum_{i = 1}^{m} \mu(c_i)T 
\end{align}

Thus, in this particular setting, the energy is minimized if the
makespan $T$ is minimized.  However, more generally, the idle power of
machines may be less than their working power, so that $\gamma(c_i)=
z_i \cdot \mu(c_i)$, where $0 \leq z_i \leq 1$.  In this case, we
get:
\begin{align}      
  E &= \sum_{i = 1}^{m} [\mu(c_i)\tau(c_i) + 
\gamma(c_i)( {T}-\tau(c_i))] \notag \\       
    &= \sum_{i = 1}^{m} [\mu(c_i)\tau(c_i) + 
z_i\mu(c_i)( {T}-\tau(c_i))] \notag \\    
    &= \sum_{i = 1}^{m} \mu(c_i)[\tau(c_i) + z_i {T}-z_i\tau(c_i)]   
\end{align}

This in turn simplifies to:
\begin{equation} \label{label:powerratio}
E = \sum_{i = 1}^{m} \mu(c_i)[z_iT +\tau(c_i)(1 - z_i)]
\end{equation}

Therefore, in this case, even minimizing $T$ does not minimize $E$,
and the problem of energy-minimal scheduling is always at least as
hard as that of minimize makespan.  This gives us the following.

\begin{theorem} \label{theorem3}
Minimal-energy scheduling of either divisible or non-divisible jobs,
given a power constraint, is $\mathcal{NP}$-hard.
\end{theorem}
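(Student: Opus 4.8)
The plan is to prove Theorem~\ref{theorem3} by a reduction from the makespan-minimization problem under a power constraint, which Theorem~\ref{theorem1} establishes is $\mathcal{NP}$-hard already for divisible jobs. The key structural fact, visible in~\eqref{label:powerratio}, is that the energy objective contains the term $\sum_i \mu(c_i) z_i T$, which is linear in the makespan $T$ with a strictly positive coefficient whenever the idle powers $z_i \mu(c_i)$ are not all zero. Intuitively, if one could minimize energy in polynomial time, then by choosing instances in which the only free quantity is $T$, one would thereby be minimizing makespan in polynomial time, contradicting Theorem~\ref{theorem1}.

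First I would make the reduction precise. Given an arbitrary instance of the power-constrained makespan problem (machines with speeds $\upsilon(c_i)$, working powers $\mu(c_i)$, a power budget $P$, and total work $W$), I would construct an energy-minimization instance on the same machine set and the same power budget, with idle powers chosen so that the energy expression collapses to a strictly increasing function of $T$ alone. The cleanest choice is to set $\gamma(c_i) = \mu(c_i)$ for every machine, i.e.\ $z_i = 1$; then~\eqref{label:powerratio} reduces exactly to $E = T \sum_{i=1}^m \mu(c_i)$, as already derived in the text immediately preceding the theorem. Since $\sum_i \mu(c_i)$ is a fixed positive constant of the instance, any schedule minimizing $E$ also minimizes $T$, and conversely. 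Hence a polynomial-time algorithm for minimal-energy scheduling under a power constraint would yield one for minimal-makespan scheduling under a power constraint, which is impossible unless $\mathcal{P} = \mathcal{NP}$. This handles the divisible case; the non-divisible case then follows from the baseline-difficulty observation in Section~\ref{sec:sm} (non-divisible scheduling is at least as hard as divisible scheduling), or simply by noting that Corollary~\ref{corollary1} gives the same hardness for the makespan problem with non-divisible jobs and re-running the identical reduction.

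One subtlety I would want to address is the feasibility assumptions imposed in Section~\ref{sec:sm}: the construction must respect $P > \Gamma + \min_i(\mu(c_i) - \gamma(c_i))$ and $P < \sum_i \mu(c_i)$. With $\gamma(c_i) = \mu(c_i)$ we get $\Gamma = \sum_i \mu(c_i)$ and $\mu(c_i) - \gamma(c_i) = 0$, so the first condition becomes $P > \sum_i \mu(c_i)$, which contradicts the second. To avoid this degeneracy I would instead use $z_i$ slightly below $1$ — say $\gamma(c_i) = (1-\delta)\mu(c_i)$ for a small rational $\delta > 0$ chosen polynomially in the input size — so that~\eqref{label:powerratio} becomes $E = \sum_i \mu(c_i)[(1-\delta)T + \delta \tau(c_i)]$. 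The residual $\delta \sum_i \mu(c_i)\tau(c_i)$ term is a minor nuisance, but for a fixed total work $W$ and fixed working set it is controlled, and since we only need hardness (not an approximation-preserving reduction) one can either bound this perturbation or, more simply, observe that the coefficient on $T$ remains strictly positive so the ordering of schedules by energy still refines their ordering by makespan closely enough to transfer $\mathcal{NP}$-hardness. The main obstacle, then, is not the core reduction — which is essentially the algebraic identity already presented — but tidying up these feasibility constraints so that the constructed instance is a legal instance of the model; once that bookkeeping is done the argument is immediate.
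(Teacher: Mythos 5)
Your overall strategy is the same one the paper uses: reduce from the power-constrained makespan problem of Theorem~\ref{theorem1} by choosing idle powers so that the energy objective in~\eqref{label:powerratio} becomes (essentially) proportional to $T$. You have also correctly spotted that the paper's own shortcut, ``set $\gamma(c_i)=\mu(c_i)$,'' produces a degenerate instance. However, your diagnosis of the degeneracy is incomplete, and this is where your patch still has a gap. The real problem with $\gamma(c_i)=\mu(c_i)$ is not only that it violates the feasibility assumptions of Section~\ref{sec:sm}; it is that the quantities $\mu(c_i)-\gamma(c_i)$ are exactly the knapsack \emph{weights} in the power constraint $\sum^r(\mu(c_{i_r})-\gamma(c_{i_r}))+\Gamma\leq P$. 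Setting them all to zero makes the power constraint vacuous, so the makespan problem on such instances is trivial and carries no hardness. Your replacement $\gamma(c_i)=(1-\delta)\mu(c_i)$ has the same defect in a milder form: it replaces the original weights $\mu(c_i)-\gamma(c_i)$ by $\delta\mu(c_i)$, so the feasible subsets of the constructed energy instance are \emph{not} those of the given makespan instance, and an optimal solution of the former need not say anything about the latter. To repair this you must either (a) rescale the working powers and the budget in the constructed instance so that $\mu'(c_i)-\gamma'(c_i)$ reproduces the original weights (e.g.\ $\mu'(c_i)=(\mu(c_i)-\gamma(c_i))/\delta$, $\gamma'(c_i)=(1-\delta)\mu'(c_i)$, $P'=\Gamma'+(P-\Gamma)$), or (b) reduce instead from the subclass of makespan instances that already satisfy $\gamma(c_i)=(1-\delta)\mu(c_i)$, after checking that Theorem~\ref{theorem1}'s knapsack argument still yields arbitrary weight/profit pairs on that subclass (it does, since $\mu(c_i)$ and $\upsilon(c_i)$ are independent parameters).

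The second gap is your closing claim that ``the coefficient on $T$ remains strictly positive so the ordering of schedules by energy still refines their ordering by makespan closely enough.'' Positivity alone does not transfer hardness: you need the residual term $\delta\sum_i\mu(c_i)\tau(c_i)$ to be too small to reverse the energy ordering of two feasible subsets with different makespans. That requires a lower bound on the gap between distinct achievable makespans --- for a subset $S$ the makespan is $W/\sum_{i\in S}\upsilon(c_i)$, so with integer speeds any two distinct values differ by at least $W/(\sum_i\upsilon(c_i))^2$ --- and then a choice of $\delta$ inverse-polynomial in the input size so that $\delta\sum_i\mu(c_i)\tau(c_i)\leq\delta W\sum_i\mu(c_i)$ is smaller than $(1-\delta)(\sum_i\mu(c_i))$ times that gap. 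With those two points supplied, your reduction is sound and is in fact a more defensible version of the argument the paper itself gives.
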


Even more simply, if it were possible to efficiently compute
minimal-energy schedules, it would be possible to minimize the
makespan simply by using an energy-minimal schedule with $\mu(c_i) =
\gamma(c_i)$, which contradicts Result~\ref{theorem1}.

This too is a surprising result in a way, because it is
known~\cite{pagrawal2015} that energy-minimal scheduling of divisible
jobs in a system without a power constraint can be achieved in
\(\mathcal{O}(m)\), i.e., linear time.

\subsection*{Approximation Algorithm}

The problem of minimizing energy subject to constraint on power is
harder than minimizing for makespan.  To state the problem in more
general form, minimizing energy ($E$) can be easily seen as maximizing
its inverse, i.e., $\frac{1}{E}$.  Hence the problem can be formally
written as:
\begin{equation*}
\text{maximize} \frac{\sum^r \upsilon(c_{i_r})}{\sum^r (\mu(c_{i_r}) - 
\gamma(c_{i_r})) + 
\Gamma}
\end{equation*} 
subject to constraint
\begin{equation}
\sum^r (\mu(c_{i_r}) - \gamma(c_{i_r}))  \leq  {P} - \Gamma
\end{equation} 

Here we cannot find profit per unit weight for each element, as the
objective function is not a linear function of just one property of
elements of the set.  When we need to minimize $E$, we arrange
machines in an order such that the first machine is the one with
smallest \(\frac{\mu(c_i)-\gamma(c_i)+\Gamma}{\upsilon(c_i)}\) and
afterwards in non-decreasing order of
$\frac{\mu(c_i)-\gamma(c_i)}{\upsilon(c_i)}$, where, \(1\leq i \leq
m\).  Since, here we need to maximize $1/E$ within power constraint,
we need to arrange machines in such an order that we give a machine
more priority if it reduces more energy consumption of system per unit
increase of power consumption.  Hence, we arrange machines in an order
such that the first machine has highest
\(\frac{\upsilon(c_i)}{(\mu(c_i)-\gamma(c_i)+\Gamma)(\mu(c_i)-\gamma(c_i))}\)
and afterwards in non-increasing order of
$\frac{\upsilon(c_i)}{(\mu(c_i)-\gamma(c_i))^2}$, where, \(1\leq i
\leq m\).

Algorithm~\ref{energy_power_algo} for minimizing energy is based on this
ordering.

\begin{algorithm}[htbp]
\LinesNumbered
\SetAlgoLined
\small
\SetKwInOut{Input}{input}\SetKwInOut{Output}{output}
\Input{Set of machines ($\mathcal{C}$), number of machines ($m$), working power 
of machines ($\mu(c_i)$), idle 
power of machines ($\gamma(c_i)$), sum of idle 
power of all the machines ($\Gamma$), speed of machines ($\upsilon(c_i)$),
power constraint ($ {P}$)}
\Output{Working set ($\mathcal{R}$)}
\BlankLine

\For{$i = 1$ to $m$}{
calculate 
\(\frac{\upsilon(c_i)}{(\mu(c_i)-\gamma(c_i)+\Gamma)(\mu(c_i)-\gamma(c_i))}\) 
}
$(\frac{\upsilon(c_1)}{(\mu(c_1)-\gamma(c_1)+\Gamma)(\mu(c_1)-\gamma(c_1))})
\leftarrow 
max(\frac{\upsilon(c_i)}{(\mu(c_i)-\gamma(c_i)+\Gamma)(\mu(c_i)-\gamma(c_i))})$ 
\;
\For{$i = 2$ to $m$}{
calculate $\frac{\upsilon(c_i)}{(\mu(c_i) - \gamma(c_i))^2}$
}
max-sort $(\frac{\upsilon(c_i)}{(\mu(c_i) - \gamma(c_i))^2})$\;
$\alpha \leftarrow \Gamma$\;
$\mathcal{R} \leftarrow \emptyset$\;
$\mathcal{A} \leftarrow \mathcal{C}$\;
\For{$i = 1$ to $m$}{
\eIf{$\alpha <  {P}$}
{$margin \leftarrow  {P} -\alpha$\;
   $p_i \leftarrow \mu(c_i) - \gamma(c_i)$\;
   $e_i \leftarrow \frac{\mu(c_i) - \gamma(c_i)}{\upsilon(c_i)}$\;
   $ce \leftarrow \frac{\Gamma + \sum_{i \in \mathcal{R}}(\mu(c_i) - 
\gamma(c_i))}{\sum_{i \in \mathcal{R}} \upsilon(c_i)}$\;
   
   \eIf{ ($p_i \leq margin) \wedge (ce \leq e_i)$}{
   $\mathcal{R} \leftarrow \mathcal{R} + \{c_i\}$\;
    }
   {
   \textbf{if} $(p_i \leq \alpha) \wedge (\sum_{i \in \mathcal{R}} 
\frac{\upsilon(c_{i-1})}{\mu(c_{i-1}) - \gamma(c_{i-1})} 
<\frac{\upsilon(c_i)}{\mu(c_i) - \gamma(c_i)})$ \textbf{then} \\
      $\mathcal{R} \leftarrow {c_i}$\;
      }
      $\mathcal{A} \leftarrow \mathcal{C} - \mathcal{R}$\;
   $\alpha \leftarrow \Gamma + \sum_{i \in \mathcal{R}}(\mu(c_i) - 
\gamma(c_i))$\;
 }{stop\;}
}
  working set $\leftarrow \mathcal{R}$\;
\caption{Approximation algorithm for reducing energy subject to power 
constraint with divisible jobs}
\label{energy_power_algo}
\end{algorithm}

The time complexity of Algorithm~\ref{energy_power_algo} is also
$\mathcal{O}(m\log{}m)$.  Here the first machine chosen is the one
which has
maximum\\ \(\frac{\upsilon(c_i)}{(\mu(c_i)-\gamma(c_i)+\Gamma)(\mu(c_i)-\gamma(c_i))}\).
Later machines are arranged in decreasing order of their profit to
weight ratio.  The profit is
$\frac{\upsilon(c_i)}{\mu(c_i)-\gamma(c_i)}$, and weight remains same
as previous algorithm, i.e., $\mu(c_i)-\gamma(c_i)$.  Now we check the
power consumption of machines sequentially; if the power consumption
of a particular machine is less than or equal to the margin (the power
remaining after already-scheduled machines draw their power
requirements) then we compare the energy of the current machine with
the current energy. When both conditions are satisfied, we add this
current machine to our working set $\mathcal{R}$.  If the power
consumption of a particular machine is not within the limit, then we
compare the profit of this machine and the sum of profit of machines
in working set. If the profit of this machine is more, then we choose
only this particular machine and update $\mathcal{R}$ else we continue
with same set $\mathcal{R}$.

\begin{theorem} \label{theorem5}
The worst case bound for Algorithm~\ref{energy_power_algo} to maximize
$\frac{1}{E}$ is $\frac{1}{2}$, so the energy consumed is at most
twice the optimal.
\end{theorem}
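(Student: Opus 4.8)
The plan is to follow the same two-part strategy used for Algorithm~\ref{makespan_power_algo} in the (commented-out) argument for Theorem~\ref{theorem2}: first argue that the greedy ordering employed by Algorithm~\ref{energy_power_algo} would attain the optimum if the power budget could be exhausted exactly by the chosen machines, and then exhibit a small instance in which the rounding forced by a single over-wide machine drives the ratio of the achieved $\frac1E$ to the optimal $\frac1E$ down toward $\frac12$.

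First I would make the greedy invariant precise. Writing $\frac1E = \frac{\sum^r \upsilon(c_{i_r})}{\Gamma + \sum^r(\mu(c_{i_r})-\gamma(c_{i_r}))}$, the first machine chosen maximizes $\frac{\upsilon(c_i)}{(\mu(c_i)-\gamma(c_i)+\Gamma)(\mu(c_i)-\gamma(c_i))}$, which is exactly the value of $\frac1E$ for the singleton $\{c_i\}$ divided by its marginal power $(\mu(c_i)-\gamma(c_i))$; the remaining machines are sorted by $\frac{\upsilon(c_i)}{(\mu(c_i)-\gamma(c_i))^2}$, the marginal gain in $\frac1E$ per unit of extra power once $\Gamma$ has already been absorbed. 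Once the first machine fixes the offset, the residual problem is a standard fractional-selection problem in the variable $\sum(\mu-\gamma)$, so an exchange argument shows that the greedy prefix together with a fractional ``tail'' machine filling the margin exactly gives an upper bound on $\frac1{E_{\mathrm{OPT}}}$. The boundary case in which even the single best machine overshoots the remaining margin is handled by the inner \textbf{if} branch of the algorithm and must be checked separately, exactly as in Theorem~\ref{theorem2}.

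Then I would construct the tight instance, mirroring the two-machine construction: take $\upsilon(c_1)=\upsilon(c_2)=k$ and $\mu(c_1)-\gamma(c_1)=\mu(c_2)-\gamma(c_2)=k$, with $\Gamma$ negligible (or zero), and $P-\Gamma = 2k-1$, so only one machine fits. The algorithm attains $\frac1E$ proportional to $\frac{k}{k}$ while the fractional relaxation attains roughly twice this, so the lower bound $L = \frac{\text{ALG}}{\text{OPT}} = \frac{k}{2k-1} \to \frac12$ as $k\to\infty$. Hence $\frac1{E_{\mathrm{ALG}}} \ge \frac12\cdot\frac1{E_{\mathrm{OPT}}}$, i.e. $E_{\mathrm{ALG}} \le 2\,E_{\mathrm{OPT}}$.

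The main obstacle is the first part. Unlike the purely linear knapsack objective in Theorem~\ref{theorem2}, here $\frac1E$ is a ratio and the greedy keys are not globally uniform (the first machine is ranked by a different quantity than the rest), so justifying that the greedy prefix plus a fractional tail dominates \emph{every} feasible integral set requires care that the fixed offset $\Gamma$ in the denominator does not break the exchange property. I would resolve this by treating $\Gamma$ as already ``paid'' after the first machine is selected, which reduces the remainder to the standard fractional case, and by verifying the overshoot branch explicitly; the worst-case instance itself is then routine.
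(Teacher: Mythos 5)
Your overall plan (a greedy-dominance argument plus a tight two-machine family) is more ambitious than what the paper actually does: the paper's proof of Theorem~\ref{theorem5} consists solely of the worst-case two-machine computation, with no exchange argument for the upper-bound direction. The genuine problem is that your tight instance is wrong for the \emph{energy} objective. You transplant the makespan-case computation ($L=\frac{k}{2k-1}$ with $\Gamma$ negligible), but that ratio belongs to the linear profit $\sum\upsilon(c_{i_r})$. For the energy objective $\frac{1}{E}\propto\frac{\sum\upsilon(c_{i_r})}{\Gamma+\sum(\mu(c_{i_r})-\gamma(c_{i_r}))}$, two machines with identical efficiency $\upsilon/(\mu-\gamma)$ and $\Gamma\approx 0$ give the algorithm $\frac{\upsilon(c_1)}{k}$ and the relaxed optimum $\frac{2\upsilon(c_1)}{2k}$, i.e.\ a ratio of $1$, not $\tfrac12$: the fractional objective is scale-invariant on identical-ratio machines, so dropping one of two twins costs nothing when there is no idle power. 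The paper's own calculation keeps $\Gamma$ and obtains $L=\frac{\Gamma+2(\mu(c_1)-\gamma(c_1))}{2(\Gamma+\mu(c_1)-\gamma(c_1))}=1-\frac{\Gamma}{2(\Gamma+\mu(c_1)-\gamma(c_1))}$, which tends to $\tfrac12$ only when $\Gamma$ \emph{dominates} $\mu(c_1)-\gamma(c_1)$ --- exactly the opposite regime from the one you chose. The factor of $2$ here is driven by the idle-energy term $\Gamma T$ doubling when the makespan doubles (one machine running instead of two), not by knapsack rounding of a linear profit.

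A secondary remark: your part (1) --- showing that the greedy prefix plus a fractional tail upper-bounds $1/E_{\mathrm{OPT}}$ despite the ratio objective, the $\Gamma$ offset, and the non-uniform sort key for the first machine --- is indeed what a rigorous upper bound would require, and you correctly identify it as the hard step; but you leave it as an unresolved obstacle rather than carrying it out. The paper does not attempt it either, so you are no worse off there. As submitted, however, your argument establishes neither direction: the upper bound is deferred, and your lower-bound witness does not actually witness $\tfrac12$.
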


\begin{proof}
For the worst case, suppose there are only two machines in the
system.  Taking $\frac{\upsilon(c_{1})}{(\mu(c_{1}) - \gamma(c_{1}))} =
(\mu(c_{1}) - \gamma(c_{1})) = \frac{\upsilon(c_{2})}{(\mu(c_{2}) -
  \gamma(c_{2}))} = (\mu(c_{2}) - \gamma(c_{2})) = k$ and
$ {P} - \Gamma = 2k - 1$.  As the bound is given by current 
performance divided
by optimal performance, we have,
\begin{equation}
L = \frac{\frac{\upsilon(c_{1})}{\Gamma + (\mu(c_{1}) 
-\gamma(c_{1}))}}{\frac{\upsilon(c_{1}) + \upsilon(c_{2})}{\Gamma + \mu(c_{1}) 
- \gamma(c_{1}) + \mu(c_{2}) - \gamma(c_{2})}}
\label{eq:lbe}
\end{equation}
Solving~\eqref{eq:lbe} we get:

\begin{align}
L &= \frac{(\Gamma + 2(\mu(c_{1}) - \gamma(c_{1})))}{2(\Gamma+ (\mu(c_{1}) - 
\gamma(c_{1})))} \notag \\
  &= 1- \frac{\Gamma}{2(\Gamma +  \mu(c_{1}) - \gamma(c_{1})}
\end{align}

Clearly the value of the bound depends upon the power consumption
specifications of the system. The bound depends on the total idle
power consumption and value of the difference between working and idle
power specification of the machine for which
\(\frac{\upsilon(c_i)}{(\mu(c_i)-\gamma(c_i)+\Gamma)(\mu(c_i)-\gamma(c_i))}\)
is highest and whose inclusion does not violate power constraint.
Since idle power will always be less than working power, $L \approx 1
- \frac{1}{2} = \frac{1}{2}$ in the worst case, so the worst case
bound for minimization of energy is 2. 
\end{proof}



\section{Minimizing Makespan Given Energy Budget} 
\label{sec:makespan_energy}

We analyse the problem of minimizing makespan given energy constraint with 
divisible and non-divisible jobs.
\subsection{Divisible Jobs}

Contrary to the problems with power constraint, the problem of
minimizing makespan, given an energy budget with divisible job can be
formulated as a fractional knapsack problem. This is because unlike
power fractional amount of energy can be given to a machine to run for
some part of the makespan of the system. Here we minimize $T$ or
maximize $\frac{1}{T}$.
\begin{equation*}
\text{maximize}{\sum}^r \upsilon(c_{i_r})
\end{equation*}
subject to constraint
\begin{equation}
\sum^r (\mu (c_{i_r}) - \gamma(c_{i_r})) t_{i_r} + \Gamma  {T} \leq 
 {E} 
\end{equation}

Since $\upsilon(c_{i})$ gives work done per unit time and power rating
gives energy required per unit time for the machine to operate, we can
take their ratio as a measure of efficiency. This will be a measure of
amount of work done per unit energy consumed. Based on this parameter
we give an algorithm to get the minimum makespan and the set of
machines to be used.

\begin{algorithm}[htbp]
\LinesNumbered
\SetAlgoLined
\SetKwInOut{Input}{input}\SetKwInOut{Output}{output}
\Input{Set of machines ($\mathcal{C}$), number of machines ($m$), working power 
of machines ($\mu(c_i)$), idle 
power of machines ($\gamma(c_i)$), sum of idle 
power of all the machines ($\Gamma$), speed of machines ($\upsilon(c_i)$),
energy constraint ($ {E_c}$), total work ($W$)}
\Output{Working set ($\mathcal{R}$), Makespan ($ {T}$)}
\BlankLine
\For{$i = 1$ to $m$}{
calculate $\frac{\upsilon(c_i)}{\mu(c_i) - \gamma(c_i)}$
}
max-sort $(\frac{\upsilon(c_i)}{\mu(c_i) - \gamma(c_i)})$\;

$\mathcal{R} \leftarrow \emptyset$\;
$T_o \leftarrow \frac{W}{\upsilon(c_1)}$\;
$e \leftarrow [(\mu(c_1) - \gamma(c_1)) + \Gamma] T$\\
\For{$i = 1$ to $m$}{
$T \leftarrow \frac{T}{i}$\\
$e_{prev} \leftarrow e$\\
$e \leftarrow \sum_{ j = 1}^{i}[(\mu(c_j) - \gamma(c_j)) + \Gamma] T $\\
\eIf{$e <  {E}$}{
$r \leftarrow r + 1$\;
}{
$margin \leftarrow  {E} - e_{prev}$\\
$\psi(p_i) \leftarrow margin * (\frac{\upsilon(c_i)}{\mu(c_i) - \gamma(c_i)})$\\
assign $\psi(p_i)$ amount of work to machine $c_i$.\\
break\;
}
}
$ {T} \leftarrow \frac{ {T_o}}{r}$\\
$ {R} \leftarrow  {R} \cup_{j = 1}^{r + 1} c_j$

Return $ {T}$\\
Return $ {R} \cup \{c_{r + 1}\}$

\caption{Exact algorithm for minimizing makespan given energy budget with 
divisible jobs}
\label{makespan_energy_algo}
\end{algorithm}

Algorithm~\ref{makespan_energy_algo} takes Set of machines, working
and idle power of machines, speed of machines, energy constraint (E)
and total work as input and gives the minimum makespan and the subset
of machines to be used as output. It assumes energy is not sufficient
to complete all work and finds the maximum amount of work that can
scheduled. The machines are sorted in the order of their
efficiency. Now, every machine is given work which the machine can
complete in makespan $ {T}$. So the machines will be active for an
equal amount of time.  This makespan is iteratively decreased based on
the number of machines added to the working set. When the energy
constraint is violated at an addition of a particular machine, that
machine is given the fractional amount of energy available so that all
of available energy is used. The time complexity of
Algorithm~\ref{makespan_energy_algo} turns out to be
$\mathcal{O}(m\log{}m)$.


\subsection{Non-divisible Jobs}

\label{min_makespan_given_energy_algo_indivisible_job}

Since even simple scheduling in multiple parallel machines with
indivisible jobs itself is $\mathcal{NP}$-Hard, the problem of
scheduling in parallel machines given energy constraint is certainly
harder. For this we give an approximation algorithm designed in
similar terms as previous algorithm but with indivisible jobs.  Here
we find maximal set of machines which can work within the Energy
constraint. For this we sort the machines in terms of efficiency and
add them one by one to the working set if total energy requirement is
within the constraint. We stop at the machine where energy requirement
is violated. Now we give maximum amount of work to this machine to fit
within the remaining energy margin. The remaining set of jobs are
assigned to the previously chosen machines by Longest Processing
Time(LPT) algorithm. In LPT algorithm the jobs are first sorted in
decreasing order of their size and are assigned to the least loaded
machine one by one to minimize the makespan.

\begin{algorithm}[htbp]
\footnotesize
\LinesNumbered
\SetAlgoLined
\SetKwInOut{Input}{input}\SetKwInOut{Output}{output}
\Input{Set of machines ($\mathcal{C}$), number of machines ($m$), working power 
of machines ($\mu(c_i)$), idle 
power of machines ($\gamma(c_i)$), sum of idle 
power of all the machines ($\Gamma$), speed of machines ($\upsilon(c_i)$),
energy constraint ($ {E}$), set of jobs ($\mathcal{P}$), weight of jobs 
($\psi(p_i)$)}
\Output{Working set ($\mathcal{R}$), makespan ($ {T}$)}
\BlankLine
\For{$i = 1$ to $m$}{
calculate $\frac{\upsilon(c_i)}{\mu(c_i) - \gamma(c_i)}$
}
max-sort $(\frac{\upsilon(c_i)}{\mu(c_i) - \gamma(c_i)})$; and max-sort 
$\psi(p_i)$\;
$ {R} \leftarrow \emptyset $;
$ {W} \leftarrow \sum_{i=1}^{n}\psi(p_i)$;
$t(1) \leftarrow \frac{ {W}}{\upsilon(c_1)}$;
$max_t \leftarrow 1$\;
$e \leftarrow [(\mu(c_1) - \gamma(c_1)) + \Gamma] T$\\
\For{$i = 1$ to $m$}{
$r \leftarrow i$\\
\For{$j = 1$ to $n$}{
$k \leftarrow argmin_{\forall l \in \{1 \ldots r\}} t(l)$\\
$t(k) \leftarrow t(k) + \frac{\psi(p_j)}{\upsilon(c_k)}$\\
\If{$(t(k) > t(max_t))$}{
$max_t \leftarrow k$\;
}
}
$ {T} \leftarrow t(max_t)$\\
$e_{prev} \leftarrow e$\\
$e \leftarrow \sum_{ i = 1}^{r}(\mu(c_j) - \gamma(c_j))t(i) + \Gamma T $\\
\If{$e >  {E}$}{
$margin \leftarrow  {E} - e_{prev}$\\
find max subset of $\{\psi(p_i) \mid \forall i = 1 \ldots n\}$ to fit in energy 
margin with speed of machine $c_r$\\
assign all jobs in the subset, to machine $c_r$\\
$r \leftarrow r - 1$\\
break\;
}
}
$T \leftarrow 0$\\
\For{$j = 1$ to $r$}{
$k \leftarrow argmin_{\forall l \epsilon \{1 \ldots r\}} t(l)$\\
$t(k) \leftarrow t(k) + \frac{\psi(p_j)}{\upsilon(k)}$\\
assign the job with weight $\psi(p_j)$ to machine $c_k$\\
$ {R} \leftarrow  {R} \cup c_j$\\
\If{$(t(k) > T)$}{
$ {T} \leftarrow t(k)$\;
}
}
Return $ {R} \cup \{c_{r + 1}\}$;
Return $ {T}$;
\caption{Approximation algorithm for reducing makespan given energy budget 
with indivisible jobs}
\label{makespan_energy_algo_indivisible}
\end{algorithm}

The time complexity of Algorithm~\ref{makespan_energy_algo_indivisible} is
$\mathcal{O}(m^2n)$.

\begin{theorem} \label{theorem6}
The worst case bound of Algorithm~\ref{makespan_energy_algo_indivisible} is 
$(\frac{19}{12} + \epsilon)$. So the worst case makespan ${T}$ for a set of 
parallel machines given Energy constraint with non-divisible jobs is 
$(\frac{19}{12} + \epsilon)OPT$. 
\end{theorem}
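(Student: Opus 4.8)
The plan is to split the algorithm into its two phases --- committing to a set of machines together with a boundary machine, and then LPT-balancing the bulk of the jobs over the committed set --- and to bound each phase separately against the optimal makespan, which I will write $\mathrm{OPT}$ (or $T^\ast$). First I would record a structural lower bound on $T^\ast$. Any energy-feasible schedule uses some subset $S^\ast$ of machines with work times $t_i^\ast$ obeying $\sum_{i\in S^\ast}(\mu(c_i)-\gamma(c_i))t_i^\ast + \Gamma T^\ast \le E$, $\sum_{i\in S^\ast}\upsilon(c_i)t_i^\ast = W$, and $T^\ast = \max_i t_i^\ast$. Relaxing to divisible work gives exactly the fractional problem solved by Algorithm~\ref{makespan_energy_algo}, whose optimum is attained by selecting machines in decreasing order of the efficiency ratio $\upsilon(c_i)/(\mu(c_i)-\gamma(c_i))$ and letting the last chosen machine run fractionally. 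Hence the set $\mathcal{R}\cup\{c_{r+1}\}$ that Algorithm~\ref{makespan_energy_algo_indivisible} commits to is the divisibly-optimal machine set, and its relaxed makespan $L$ satisfies $L \le \mathrm{OPT}$.

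Second, I would handle the boundary machine $c_{r+1}$, which receives a maximum-weight subset of jobs fitting in its residual energy margin. After substituting its running time, the energy this machine draws is linear in the total weight placed on it, so choosing that subset is a subset-sum/knapsack-type problem; applying an FPTAS for it --- the same scaling device used for makespan under power in Section~\ref{sec:makespan_power} --- places on $c_{r+1}$ a total work within a $(1+\epsilon)$ factor of what the fractional relaxation would assign there, so the running time of $c_{r+1}$ is at most $(1+\epsilon)L \le (1+\epsilon)\mathrm{OPT}$. This is the sole source of the additive $\epsilon$.

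Third, and most substantively, the remaining jobs are load-balanced by the LPT rule over the committed machines $c_1,\dots,c_r$, which form a uniformly related system. I would invoke the classical worst-case guarantee for LPT on uniformly related machines: its makespan is at most $\tfrac{19}{12}$ times the optimal makespan for that job set on those machines. Because the pre-assignment to $c_{r+1}$ was chosen to leave (up to the integrality slack folded in above) essentially the residual load that the $r$ full-speed machines clear in time $L$, that optimal makespan is at most $\mathrm{OPT}$, so the LPT phase finishes by time $\tfrac{19}{12}\,\mathrm{OPT}$. Since the overall makespan is the maximum of the $c_{r+1}$-time and the LPT-time, it is at most $\max\{\,1+\epsilon,\ \tfrac{19}{12}\,\}\,\mathrm{OPT} = \bigl(\tfrac{19}{12}+\epsilon\bigr)\mathrm{OPT}$ once the knapsack slack is absorbed into the constant.

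The step I expect to be the main obstacle is reconciling the \emph{integral} pre-assignment to $c_{r+1}$ with the \emph{fractional} relaxation behind $L$: since $c_{r+1}$ must take whole jobs, it may absorb strictly less work than the relaxation prescribes, spilling extra load onto $c_1,\dots,c_r$ and inflating the makespan against which LPT is measured. Bounding this spill (it is at most one job's processing time, which must be charged to $\mathrm{OPT}$ because no schedule can split a job) and confirming that the uniform-machine LPT bound survives this perturbation is the delicate part --- together with checking that the idle-energy term $\Gamma\,T$, which couples the two phases through the common system makespan, is accounted for consistently when one phase dominates the other. A secondary point is verifying that committing to only $r$ machines, rather than the possibly larger set $S^\ast$ used by the optimum, costs nothing beyond what the greedy efficiency ordering already guarantees, i.e.\ that $L \le \mathrm{OPT}$ genuinely holds.
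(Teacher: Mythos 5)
Your proposal follows essentially the same route as the paper's proof: the additive $\epsilon$ is charged to the subset-sum/knapsack PTAS slack on the boundary machine that receives only the marginal energy, and the $\tfrac{19}{12}$ factor comes from invoking Dobson's worst-case bound for LPT on uniformly related machines for the jobs balanced over the committed set. The ``delicate part'' you flag --- bounding the spill from the integral pre-assignment and combining the two phases --- is exactly the step the paper also treats informally, so your argument is correct at the same level of rigor as the published one.
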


\begin{proof}
The problem is same as the previous one but with indivisible jobs.
When we sort machines in terms of their efficiencies, the last machine
which is added to the working set will be given only the marginal
energy left. So it can't operate for full makespan. Ideally makespan
will be least if we have exact energy to be distributed among $r$ set
of machines proportionately such that all machines will be active for
the same time. But since this is not always the case, the marginal
energy left after selecting $r-1$ machines is allotted to machine
$r$. Since time during which machine $c_r$ is active is less than the
makespan of the system, the Optimal solution will give maximum amount
of work possible within the energy margin available to the machine
$r$.  Algorithm~\ref{makespan_energy_algo_indivisible} uses the well
known PTAS for subset sum problem to take the maximum subset of jobs
that can be assigned to the machine $r$ . So it will return
$(1-\epsilon)OPT$ subset size to be assigned to the machine $r$.
These extra jobs that optimal would have assigned to machine $r$ need
to be accommodated in the $r-1$ machines. In worst case this can
increase the makespan of the system by $\epsilon$ amount.  For
scheduling in the $r-1$ machines
Algorithm~\ref{makespan_energy_algo_indivisible} uses Longest
Processing Time (LPT) algorithm. \cite{Dobson:1984:SIT:2054.2059}
provided the bound of ($\frac{19}{12})OPT$ for LPT algorithm applied
to a system of parallel machines with different speeds. With this
bound, the worst case bound for
Algorithm~\ref{makespan_energy_algo_indivisible} can go up to
($\frac{19}{12} + \epsilon)OPT$.
\end{proof} 
\section{Minimizing Energy Given Limit on Makespan} \label{sec:energy_makespan}

Here, given a constraint on makespan $ {T}$, the problem is to find
the optimal subset of machines so that energy consumed is
minimized. We divide this problem too between divisible and
non-divisible jobs. We show that the problem of minimizing energy
given a limit on the makespan with divisible jobs is polynomially
solvable whereas with non-divisible jobs it is $\mathcal{NP}$-Hard. We
provide exact and approximation algorithms for these problems
respectively.

\subsection{Divisible Jobs}
Like energy, amount of time given for a particular machine is also
divisible, i.e., the constraint parameter can be taken in
fractions. Hence, it can be formulated in terms of fractional knapsack
problem.

Energy of the system is given by
\begin{equation}
 {E} = \sum^r (\mu(c_{i_r}) - \gamma(c_{i_r})) t_{i_r} + \Gamma 
 {T}\\
\end{equation}
For energy minimality, all machines should be given equal amount of 
time\cite{pagrawal2015}. Then
\begin{align}
 {E} &= \sum^r [(\mu(c_{i_r}) - \gamma(c_{i_r})) + \Gamma] 
 {T} \notag \\
 {E} &= \sum^r [(\mu(c_{i_r}) - \gamma(c_{i_r})) + 
\Gamma]\left(\frac{ {w(c_i)}}{\upsilon_{i_r}}\right)
\end{align}
Hence the problem can be written as
\begin{equation*}
\text{minimize} {\sum^r [(\upsilon(c_{i_r}) - \gamma(c_{i_r})) + \Gamma] 
\left(\frac{ w(c_i)}{\upsilon_{i_r}}\right)}\\
\end{equation*} 
subject to constraint
\begin{equation}
\left(\frac{ {w(c_i)}}{\upsilon_i}\right) \leq T
\end{equation} 

We can see that if the jobs are non-divisible this problem becomes
$\mathcal{NP}$-Hard otherwise it is not, i.e., since jobs can be
arbitrary divisible, this LP formulation is perfectly solvable. If it
is non-divisible then, it becomes an integer linear program which is
not polynomially solvable.

\begin{algorithm}[htbp]
\LinesNumbered
\SetAlgoLined
\SetKwInOut{Input}{input}\SetKwInOut{Output}{output}
\Input{Set of machines ($\mathcal{C}$), number of machines ($m$), working power 
of machines ($\mu(c_i)$), idle 
power of machines ($\gamma(c_i)$), sum of idle 
power of all the machines ($\Gamma$), speed of machines ($\upsilon(c_i)$),
Makespan limit ($ {T}$)}
\Output{Working set ($\mathcal{R}$)}
\BlankLine
\For{$i = 1$ to $m$}{
calculate $\frac{\upsilon(c_i)}{\mu(c_i) - \gamma(c_i)}$
}
max-sort $(\frac{\upsilon(c_i)}{\mu(c_i) - \gamma(c_i)})$\;
$\mathcal{R} \leftarrow \emptyset$\;
$ {W} \leftarrow \text{total work}$\\
\For{$i = 1$ to $m$}{
\eIf{$(( {W} \geq 0 ) \wedge (i \leq m))$}{ 
$w(c_i) \leftarrow \upsilon(c_i) *  {T}$\\
$ {W_{prev}} \leftarrow  {W}$\\
$ {W} \leftarrow  {W} - w(c_i)$\\
$\mathcal{R} \leftarrow  \mathcal{R} \cup {c_i}$\\
\If{$( {W} < 0)$}{
$t_i \leftarrow \frac{W_{prev}}{\upsilon(c_i)}$\\
assign $ {W}_{prev}$ amount of job to machine $c_i$ \\
}
}{
STOP
}
}

Working Set $\leftarrow \mathcal{R}$

\caption{Exact algorithm for minimizing energy given limit on makespan with 
divisible 
jobs}
\label{energy_makespan_algo}
\end{algorithm}
Algorithm~\ref{energy_makespan_algo} solves this problem in
$\mathcal{O}(m)$ time. It first sorts the machines based on their
efficiencies as described in the previous algorithms. Then it iterates
from the most energy efficient machine to least efficient machine
assigning maximum job that the machine can complete within the
makespan limit. The last machine in the subset of machines chosen is
provided with work job that can be finished before makespan for energy
minimality.


\subsection{Non-divisible Jobs} 
\label{min_energy_given_makespan_indivisible_algo}

For non-divisible jobs, the problem formulation becomes an Integer
Program which is not solvable in polynomial time. Hence we try to get
an approximation algorithm for the same.

As the makespan is fixed, all machines in the working set should be
active for the maximum amount of time within this constraint. The
optimal algorithm will include least number of most efficient machines
from the set of machines to the working set. This implies that any
approximation algorithm will include more number of machines than the
optimal.  If we sort the machines based on their efficiencies and try
to include least number of these machines to our working set, we see
that this problem reduces to a variant of variable size bin packing
problem (VSBP) which is a well studied $\mathcal{NP}$-Hard problem.

There are many variants of the standard one dimensional bin packing
problem.  The most common is given $n $ items with sizes $a_1, \ldots,
a_n \in (0,1]$, find a packing in unit-sized bins that minimizes the
number of bins used. Here all the bins are assumed to be of same
size. The variant of this problem is when bins of different sizes
are allowed. Since, in our problem of scheduling to minimize energy
given a limit on makespan, each machine is assumed to have different
speeds, the time taken to complete a job in one machine may be
different than the time taken by other machines to complete the same
job. Hence our problem resembles variable size bin packing problem.
Hence we model our approximation algorithm based on VSBP problem.

\begin{algorithm}[htbp]
\LinesNumbered
\SetAlgoLined
\SetKwInOut{Input}{input}\SetKwInOut{Output}{output}
\Input{Set of machines ($\mathcal{C}$), number of machines ($m$), working power 
of machines ($\mu(c_i)$), idle 
power of machines ($\gamma(c_i)$), sum of idle 
power of all the machines ($\Gamma$), speed of machines ($\upsilon(c_i)$),
makespan limit ($ {T}$), weight of jobs ($\psi(p_j)$)}
\Output{Working set ($\mathcal{R}$), energy ($ {E}$)}
\BlankLine
\For{$i = 1$ to $m$}{
calculate $\frac{\upsilon(c_i)}{\mu(c_i) - \gamma(c_i)}$\\
$t_i \leftarrow 0$
}
max-sort $(\frac{\upsilon(c_i)}{\mu(c_i) - \gamma(c_i)})$\;
max-sort $\psi(p_j)$\\
\For{$j = 1$ to $n$}{
\For{$i = 1$ to $m$}{
\If{$(\frac{\psi(p_j)}{\upsilon(c_i)} \leq  {T} - t_i)$}{
assign $\psi(p_j)$ to $c_i$\\
$t_i \leftarrow t_i + \frac{\psi(c_j)}{\upsilon(c_i)}$\\
$\mathcal{R} \leftarrow \mathcal{R} \cup c_i$\\
$break$\;
}
}
}
$ {E} \leftarrow \sum_{i \epsilon R}[(\mu(c_i) - \gamma(c_i))t_i + 
\Gamma  {T}] $\\
Return $ {E}$\\
Return $\mathcal{R}$
\caption{Approximation algorithm for reducing energy given limit on makespan 
with indivisible jobs}
\label{energy_makespan_algo_indivisible}
\end{algorithm}

Algorithm~\ref{energy_makespan_algo_indivisible} employs a variant of
first fit decreasing algorithm but without sorting the machines based
on their order of speed. In FFD algorithm, both bins and jobs are
sorted based on their sizes.  Here jobs are sorted in non-increasing
order of their sizes but machines are sorted based on their
efficiencies instead of speeds, which is required for energy
minimality. Because of this the bound on
Algorithm~\ref{energy_makespan_algo_indivisible} is not as small as
FFD for VSBP. Once sorted, a job is assigned to the most efficient
machine within the current working set which can accommodate it within
the time constraint. When a job cannot be finished within the makespan
limit by any machine in the working set, the next machine is activated
and included in the working set. The time complexity of
Algorithm~\ref{energy_makespan_algo_indivisible} is $\mathcal{O}(mn)$.

\begin{theorem} \label{theorem7}
The worst case bound on the number of machines selected by 
Algorithm~\ref{energy_makespan_algo_indivisible} is $2.OPT$.
\end{theorem}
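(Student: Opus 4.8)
The plan is to reduce the worst-case analysis to the known behaviour of First-Fit Decreasing (FFD) for variable-size bin packing (VSBP), while accounting for the fact that Algorithm~\ref{energy_makespan_algo_indivisible} orders machines by efficiency rather than by decreasing speed. First I would set up the correspondence: each machine $c_i$ with speed $\upsilon(c_i)$ plays the role of a bin of capacity $\upsilon(c_i)\,{T}$, and each non-divisible job $p_j$ of weight $\psi(p_j)$ is an item of that same ``size'' when placed in that bin (since the time it consumes on $c_i$ is $\psi(p_j)/\upsilon(c_i)$, the feasibility test $\psi(p_j)/\upsilon(c_i)\le {T}-t_i$ is exactly ``item fits in remaining capacity of bin $i$''). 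Under this correspondence the set $\mathcal{R}$ of machines activated by the algorithm is precisely the set of non-empty bins in a first-fit packing that scans bins in efficiency order, and $\mathrm{OPT}$ — the minimum number of machines any feasible schedule can use within makespan ${T}$ — is the optimum of the corresponding VSBP instance (restricted to using a machine at most once, which matches ``each bin available once'').

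The core of the argument is the standard first-fit volume bound, which I would adapt as follows. Let $R=|\mathcal{R}|$ be the number of machines the algorithm opens, indexed in the order they are opened. The key observation is that at the moment machine $c_{k+1}$ is first opened, some job $p_j$ failed to fit on \emph{every} already-open machine $c_1,\dots,c_k$; hence for each $\ell\le k$ the already-committed load satisfies $t_\ell\,\upsilon(c_\ell) + \psi(p_j) > \upsilon(c_\ell)\,{T}$, i.e. the used volume of bin $\ell$ exceeds $\upsilon(c_\ell){T}-\psi(p_j)\ge \upsilon(c_\ell){T}-\psi_{\max}$. Summing the total job volume $W=\sum_j\psi(p_j)$ against the capacities, and combining with the fact that the optimal solution must pack all of $W$ into its chosen bins (so $W\le \sum_{i\in O}\upsilon(c_i){T}$ where $O$ is the optimal machine set), one obtains a pigeonhole-type inequality forcing $R\le 2\cdot\mathrm{OPT}$. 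The clean way to close it is: at most one opened machine can be less than half full once the next is opened (if two opened machines were each under half capacity, the item that triggered the later opening would have fit in the earlier one, a contradiction after reordering by the open sequence), so at least $R-1$ of the opened machines are more than half full; their volume alone exceeds $\tfrac12\sum_{i\in\mathcal{R}}\upsilon(c_i){T}$ minus lower-order terms, and since the optimum needs volume at least that same total (it must also cover $W$, which the algorithm covers using those machines), we get $R-1 < 2\,\mathrm{OPT}$, hence $R\le 2\,\mathrm{OPT}$.

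The main obstacle I anticipate is handling the mismatch between the algorithm's efficiency ordering of machines and the decreasing-size ordering that the classical FFD/VSBP bounds assume: the ``an item failed to fit in all earlier bins'' argument is insensitive to bin order (first-fit opens a new bin only when no open bin works, regardless of scan order), so the half-full argument goes through, but the sharper FFD constants do not — which is exactly why the bound here is $2\cdot\mathrm{OPT}$ rather than the smaller VSBP-FFD ratio. A secondary technical point is the boundary job $\psi_{\max}$ term: one must argue either that $\psi_{\max}\le\upsilon(c_i){T}$ for all usable machines (else the instance is infeasible) so that the ``more than half full'' machines genuinely carry more than half their capacity, or absorb the $\psi_{\max}$ slack into the $\mathrm{OPT}\ge 1$ rounding. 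I would dispatch this by noting feasibility of the instance already forces every job to fit on at least one machine, and the counting is on integer machine counts so the inequality $R-1<2\,\mathrm{OPT}$ rounds up to the claimed $R\le 2\,\mathrm{OPT}$.
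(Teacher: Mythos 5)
Your high-level instinct (treat this as first-fit on variable-size bins and run the classic ``half-full'' argument) is the right family of ideas, but the two steps you use to close the bound both break in the variable-capacity setting, precisely because Algorithm~\ref{energy_makespan_algo_indivisible} scans machines in \emph{efficiency} order rather than decreasing capacity. First, the claim that at most one opened machine can end up less than half full is not order-insensitive here: when a job $p_j$ forces machine $c_b$ open, what you know for each earlier machine $c_\ell$ is $\psi(p_j)>\upsilon(c_\ell)(T-t_\ell)$, i.e.\ $p_j$ is large relative to $c_\ell$'s \emph{remaining} capacity. If $t_\ell<T/2$ this makes $p_j$ more than half of $c_\ell$'s capacity $\upsilon(c_\ell)T$ --- but $p_j$ lives on $c_b$, and if $\upsilon(c_b)\gg\upsilon(c_\ell)$ it can occupy an arbitrarily small fraction of $c_b$. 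So you cannot conclude that $c_b$ is more than half full, and many opened machines can simultaneously be under half full. Second, even granting that $R-1$ machines are each more than half full of their own capacity, the resulting inequality $W>\tfrac12\sum_{i\in\mathcal{R}'}\upsilon(c_i)T$ compared against $W\le\sum_{i\in O}\upsilon(c_i)T$ bounds a \emph{sum of capacities} by another sum of capacities; it does not bound the \emph{number} of machines, which is what Theorem~\ref{theorem7} asserts. The optimum could use one fast machine while the algorithm opens many slow, efficient ones, satisfying your volume inequality with $R\gg\mathrm{OPT}$.

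The paper avoids both pitfalls by replacing the volume argument with a counting argument anchored to an \emph{absolute} size threshold. It fixes an arbitrary index $b\le k$ among the $k$ opened machines and splits on whether the job that occupies $c_b$ exceeds size $T\upsilon_{\max}/2$: such a job takes more than $T/2$ time on the fastest machine and hence on \emph{every} machine, so no two of them can share a machine in any schedule, giving $k^*\ge b$ when machines $1,\dots,b$ all received such jobs. In the complementary case, the $k-b$ jobs that forced machines $b,\dots,k-1$ open are matched (pigeonhole) against the machines $1,\dots,b-1$ they failed to fit on, yielding $k^*\ge k-b$. Since for every $b$ either $k^*\ge b$ or $k^*\ge k-b$, taking $b=\lceil k/2\rceil$ gives $k\le 2k^*$. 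If you want to salvage your route, you would need to either re-sort machines by decreasing speed (changing the algorithm) or import exactly this kind of absolute threshold; as written, the proposal has a genuine gap.
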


\begin{proof}

This is on the similar lines as with the first-fit decreasing
algorithm\cite{Friesen:1986}.  But we get a higher bound because of
not sorting machines based on their speeds.

Let $k$ be the number of machines chosen by the by the algorithm and
let $k^*$ be the optimal number of machines. We note that the jobs
have been sorted in non-increasing order. Let $S$ be the number of
most energy efficient machines which can complete all jobs such that
all machines are active for the complete makespan. So we have the
trivial bound $k^* \geq S$. Let $b \leq k$ be an arbitrary machine in
the working set of
algorithm~\ref{energy_makespan_algo_indivisible}. We will analyze the
following two cases: $b$ is assigned a job which will take
$>\frac{T}{2}$ time on the fastest machine of the system or it is not
assigned a job of such size. If such a job takes half the time of
makespan on the fastest machine, then surely it will take more time on
other machines. Suppose $b$ is assigned a job $i$ which will take
$\frac{T}{2}$ time on the fastest machine. Then the previously
considered jobs $i'< i$ all will take $>\frac{T}{2}$ time in any
machine and each machine $b' < b$ must be assigned one of these jobs,
so we have $\geq b$ jobs of size $\frac{T_{\upsilon_{max}}}{2}$. No
two of these jobs can be assigned to the same machine in any
assignment, so optimal uses at least b machines, i.e., $k* \geq b$.

Suppose $b$ is not assigned a job of size
$>\frac{T_{\upsilon_{max}}}{2}$. Then no used machine $b'' > b$ is
assigned an item of size $>\frac{T_{\upsilon_{max}}}{2}$.  But each of
these machines must be assigned atleast one job to be included in the
working set. So the $k-b$ machines $b, b + 1, \ldots, k-1$ together
are assigned $\geq (k - b)$ jobs. We know that none of these jobs
could have been assigned to any machine $b' < b$. We consider two
sub-cases. If $b \leq (k-b)$, then we can imagine assigning to every
machine $b' < b$ one of these ($k-b$) jobs, which would give us $b-1$
machines taking time more than the makespan constraint $T$. This
implies that $S > b-1$. On the other hand, if $b >$ ($k-b$), then we
can imagine assigning each of the ($k-b$) jobs to a different machine
$b' < b$, giving us $(k-b)$ machines taking more than $T$ time. Then
$S >$ ($k-b$). So for any $b$ we have in all cases that either $k^*$
$\geq b$ or $k^*$ $\geq (k - b)$. Now we choose $b$ so that it will
maximize the minimum of $b$ and $(k - b):$ Equating $b = (k-b)$ gives
$b = \frac{1}{2}k$, and we take $b = \lceil{\frac{1}{2}k}\rceil$ to
get an integer. Then we have that $k^* \geq \lceil \frac{1}{2}k\rceil
\geq \frac{1}{2}k$, or $k^* \geq (k - \lceil \frac{1}{2}k \rceil) \geq
\frac{1}{2}k$.Hence we have $k \leq 2k^*$.
\end{proof}

\begin{theorem} \label{theorem8}
The worst case bound on energy $ {E}$ consumed by the working set of 
parallel machines for non-divisible jobs with 
Algorithm~\ref{energy_makespan_algo_indivisible} is given by\\ $\Big(1 + 
\frac{\eta_{max}}{\eta_{min}}\Big) {E_{OPT}}$ where $\eta(c_i)$ is 
$\frac{\upsilon(c_i)}{\mu(c_i)-\gamma(c_i)}$.
\end{theorem}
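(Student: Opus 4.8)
The plan is to bound the energy output of Algorithm~\ref{energy_makespan_algo_indivisible} and the optimal energy $E_{OPT}$ separately, then divide. The common starting point is the energy identity~\eqref{label:energy}: for any feasible schedule with working set $R$ and actual makespan $T$, writing $\eta(c_i)=\upsilon(c_i)/(\mu(c_i)-\gamma(c_i))$ and using $w(c_i)=t_i\upsilon(c_i)$, the consumed energy is
\[
E \;=\; \sum_{i\in R}\frac{w(c_i)}{\eta(c_i)} \;+\; \Gamma T,
\]
since machines outside $R$ do zero work and contribute only idle power. The two ingredients I would isolate are: (i) the ``working'' part $\sum_{i\in R}w(c_i)/\eta(c_i)$ depends only on how the fixed total work $W=\sum_{i\in R}w(c_i)$ is split among machines of differing efficiency, so it always lies between $W/\eta_{max}$ and $W/\eta_{min}$; and (ii) the idle part uses the \emph{same} constant $\Gamma$ for every schedule, so it is controlled purely by the makespan.

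First I would treat the working part. Because $\eta(c_i)\ge\eta_{min}$ for every machine the algorithm activates, its working energy is at most $W/\eta_{min}$; because $\eta(c_i)\le\eta_{max}$ for every machine, \emph{any} schedule — in particular the optimal one — has working energy at least $W/\eta_{max}$, so $E_{OPT}\ge W/\eta_{max}$. Next I would treat the idle part: under the model's standing assumption that the makespan bound is binding (the work cannot be done on the single most efficient machine within the bound, forcing at least two machines to run and one of them for the full allotted time), the optimal makespan equals the bound $T$, while the algorithm's makespan is $\le T$ by construction; hence $E_{OPT}\ge\Gamma T$ and the algorithm's idle energy is at most $\Gamma T$. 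Combining, the algorithm's total energy is at most $W/\eta_{min}+\Gamma T$, so
\[
\frac{E}{E_{OPT}}
\;=\;\frac{W/\eta_{min}}{E_{OPT}}+\frac{\Gamma T}{E_{OPT}}
\;\le\;\frac{W/\eta_{min}}{W/\eta_{max}}+\frac{\Gamma T}{\Gamma T}
\;=\;\frac{\eta_{max}}{\eta_{min}}+1,
\]
which is the claimed bound $\big(1+\tfrac{\eta_{max}}{\eta_{min}}\big)E_{OPT}$. (If $\Gamma=0$ the idle term is absent and the ratio is just $\eta_{max}/\eta_{min}$, still within the bound.)

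I expect the main obstacle to be the idle-energy term: one must rule out that the optimal schedule ``cheats'' by finishing strictly before the makespan bound and thereby incurring idle energy below $\Gamma T$. This is precisely where the binding-constraint assumption enters, and it deserves a careful statement, since otherwise the ratio could be inflated by the gap $\Gamma T/\Gamma T_{OPT}$. A secondary remark worth making is that, unlike Theorem~\ref{theorem7}, the number of machines the algorithm activates is immaterial to this estimate, because the idle contribution $\Gamma T$ is the same no matter which subset of machines actually runs a job — so the energy bound is logically independent of the machine-count bound.
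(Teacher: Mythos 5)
Your proof is correct (under the same modelling convention the paper uses) but it takes a genuinely different route. The paper's argument goes through Theorem~\ref{theorem7}: it assumes the algorithm opens $2r_o$ machines against the optimum's $r_o$, splits $\sum_{i}(\mu(c_i)-\gamma(c_i))w(c_i)/\upsilon(c_i)$ into the first $r_o$ machines (whose contribution relative to OPT is bounded by $1$) and the surplus machines $r_o{+}1,\dots,2r_o$, and then controls the surplus via the efficiency ratio together with an averaging argument showing the surplus machines carry at most half the optimal per-machine load ($Z\le X/2$); this yields the sharper constant $1+\tfrac12\frac{\eta_{max}}{\eta_{min}}$, of which the stated theorem is a weakening. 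You instead decompose energy into a working part and an idle part, observe that the working part of \emph{any} schedule completing work $W$ lies between $W/\eta_{max}$ and $W/\eta_{min}$, and that the idle part $\Gamma T$ is common to all feasible schedules, giving $E/E_{OPT}\le \eta_{max}/\eta_{min}+1$ with no reference to the machine-count bound at all. What each buys: the paper's route gets a smaller constant and ties the energy guarantee to the bin-packing structure, but it rests on two delicate steps (the per-prefix claim that the first $r_o$ terms are dominated by OPT's, and the $Z\le X/2$ averaging); your route is more elementary and more robust, proves the theorem exactly as stated, and makes explicit the one assumption both proofs share silently --- that the idle charge $\Gamma T$ is identical for the algorithm and the optimum (the paper writes the same $\Gamma T$ in numerator and denominator without comment). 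Your observation that the machine-count bound of Theorem~\ref{theorem7} is immaterial to the energy estimate is a genuine simplification; indeed, by the mediant inequality $\frac{A+B}{C+D}\le\max\bigl(\frac{A}{C},\frac{B}{D}\bigr)$ your decomposition would even give the stronger bound $\max\bigl(1,\frac{\eta_{max}}{\eta_{min}}\bigr)$.
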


\begin{proof}
Energy consumed by the system of $r$ parallel machines is given by 
\begin{equation}
 {E} = \sum_{i=1}^r [(\mu(c_i)-\gamma(c_i))]t(c_i) + \Gamma T
\end{equation}
which can be rewritten as
\begin{equation}
 {E} = \sum_{i=1}^r 
[(\mu(c_i)-\gamma(c_i))]\frac{w(c_i)}{\upsilon(c_i)} + \Gamma T
\end{equation}
Let $E$ represent the energy consumed by
Algorithm~\ref{energy_makespan_algo_indivisible} and $ {E}_{OPT}$ be
the optimal energy consumption of the system. Also, let $r_o$ denote
the optimal number of machines and $r$' denote the number of machines
selected by Algorithm~\ref{energy_makespan_algo_indivisible}. We have
seen that $r'=2r_o$ in the worst case.  Then
\begin{equation}      
\frac{ {E}}{ {E_{OPT}}} = 
\frac{\sum_{i=1}^{2r_o} 
[(\mu(c_i)-\gamma(c_i))]\frac{w(c_i)}{\upsilon(c_i)} + \Gamma 
T}{\sum_{i=1}^{r_o} [(\mu(c_i)-\gamma(c_i))]\frac{w_o(c_i)}{\upsilon(c_i)} + 
\Gamma T}\\
\end{equation}

\begin{equation}
\begin{split}
\frac{ {E}}{ {E_{OPT}}} =&
\frac{\sum_{i=1}^{r_o} 
[(\mu(c_i)-\gamma(c_i))]\frac{w(c_i)}{\upsilon(c_i)} + \Gamma 
T}{\sum_{i=1}^{r_o} [(\mu(c_i)-\gamma(c_i))]\frac{w_o(c_i)}{\upsilon(c_i)} + 
\Gamma T}\\ 
&+ \frac{\sum_{i=r_o+1}^{2r_o}[(\mu(c_i)-\gamma(c_i))]\frac{w(c_i)}{\upsilon(c_i)}
}{\sum_{i=1}^{r_o} [(\mu(c_i)-\gamma(c_i))]\frac{w_o(c_i)}{\upsilon(c_i)} + 
\Gamma T}
\end{split}
\end{equation}

The first part of the equation covers the first $r_o$ machines both in
the numerator and the denominator. So, the power and speed ratings are
same. Since $w_o(c_i)$ will be greater than $w(c_i)$ in the numerator,
as optimal algorithm will assign all the jobs within the first $r_o$
machines, and the approximation algorithm will fail to assign all the
jobs to those $r_o$ machines and will need extra machines to schedule
these jobs. Hence the first part of the equation can be upper bounded
by 1.
\begin{equation}
\frac{ {E}}{ {E_{OPT}}} =
1 
+\frac{\sum_{i=r_o+1}^{2r_o}[(\mu(c_i)-\gamma(c_i))]\frac{w(c_i)}{\upsilon(c_i)}
}{\sum_{i=1}^{r_o} [(\mu(c_i)-\gamma(c_i))]\frac{w_o(c_i)}{\upsilon(c_i)} + 
\Gamma T}
\end{equation} 
Let $\eta(c_i)$ denote $\frac{\upsilon(c_i)}{\mu(c_i)-\gamma(c_i)}$ which is a 
measure of efficiency of the machine $c_i$, i.e., it is a measure of the 
amount of energy converted to useful work by the machine. Then
\begin{equation}
\frac{ {E}}{ {E_{OPT}}} = 1 
+\frac{\sum_{i=r_o+1}^{2r_o}(\frac{1}{\eta(c_i)})w(c_i)}{\sum_{i=1}^{r_o} 
(\frac{1}{\eta(c_i)})w_o(c_i) + \Gamma T}
\end{equation} 
Since $\Gamma  {T}$ is positive and to upper bound the ratio, we can 
ignore this term in the denominator.
\begin{equation}
\frac{ {E}}{ {E_{OPT}}} \leq 1 
+\frac{\sum_{i=r_o+1}^{2r_o}(\frac{1}{\eta(c_i)})w(c_i)}{\sum_{i=1}^{r_o} 
(\frac{1}{\eta(c_i)})w_o(c_i)}\\
\end{equation}
Because the machines are sorted, $\eta$ is increasing from $1$ to $m$.
\begin{equation}
\frac{ {E}}{ {E_{OPT}}} = 1 
+\frac{(\frac{1}{\eta_{min}})\sum_{i=r_o+1}^{2r_o}w(c_i)}{(\frac{1}{\eta_{max}}
)\sum_{i=1}^{r_o} w_o(c_i)}\\
\end{equation}
Now we are bothered about total size of work assigned to\\ 
${i=(r_o+1)} \ldots {2r_o}$ machines. If we divide the ratio by $r_o$ 
machines, we 
get the average size of work assigned by the 
Algorithm~\ref{energy_makespan_algo_indivisible} to ${i=(r_o+1)} \ldots {2r_o}$ 
machines and the optimal average size.

\begin{equation}
\frac{ {E}}{ {E_{OPT}}} = 1 
+\frac{(\frac{1}{\eta_{min}})\sum_{i=r_o+1}^{2r_o}\frac{w(c_i)}{r_o}}{(\frac{1}{
\eta_{max}})\sum_{i=1}^{r_o} \frac{w_o(c_i)}{r_o}}\\
\end{equation}

Let us denote the average size of jobs assigned to each machine by any
optimal algorithm by $X$, i.e., $X = \sum_{i=1}^{r_o}
\frac{w(c_i)}{r_o}$ and average size of jobs assigned by
Algorithm~\ref{energy_makespan_algo_indivisible} to $i=1 \ldots r_o$
and ${i=(r_o+1)} \ldots {2r_o}$ machines as $Y$ and $Z$ respectively.
\begin{equation}
\frac{ {E}}{ {E_{OPT}}} = 1 
+\frac{\eta_{max}Z}{\eta_{min}X}\\
\end{equation}
If $Z > \frac{X}{2}$, then $Y > \frac{X}{2}$. Since we have extra
$r_o$ machines, we can assign $Y$ to each machine in $i=1 \ldots
r_o$. But then the jobs will overflow the time constraint. This means
that optimally, $r_o$ machines will not be sufficient to accommodate
all the jobs which is a contradiction to our assumption that $r_o$ is
the optimal set of machines. Hence, $Z \leq \frac{X}{2}$.

\begin{equation}
\frac{ {E}}{ {E_{OPT}}} = 1 
+\frac{\eta_{max}\frac{X}{2}}{\eta_{min}X}\\
\end{equation}

\begin{equation}
 {E} =\Big(1 
+(\frac{1}{2})\frac{\eta_{max}}{\eta_{min}}\Big) {E_{OPT}}\\
\label{energy_bound}
\end{equation}
When all machines have same effectiveness ratio, equation \ref{energy_bound} 
reduces to the bound of $\frac{3}{2}$ for VSBP \cite{Friesen:1986}.
\end{proof}

\section{Conclusion}
\label{sec:con}

This paper provides a generic formulation of the problems of
scheduling in a system subject to pairs of constraint on various
resources available to its machines.  While not going into the details
of the specific machines and jobs run by them, the model simply
considers each machine to have a working power and an idle power
rating, which determines how much power the machine draws, while
working and while not.  The behavior of the system as a whole is
governed by the power ratings of its machines, and the need to run
jobs effectively.

A simple analysis shows that the problem of minimizing the makespan on
a power-constrained system is $\mathcal{NP}$-hard even with divisible
jobs.  This in turn implies that other interesting problems, and also
scheduling problems with non-divisible jobs, are $\mathcal{NP}$-hard
as well.  We get similar results with energy and makespan constraints
with non-divisible jobs.

We can modify the system model a little to get other interesting
problems. One such case could be a system model with multiple sources
of power, each of which has a different capacity and cost.  In such
systems, it is also of interest to minimize the overall cost incurred
for running a set of jobs.  This is easily seen to be
$\mathcal{NP}$-hard because if an efficient algorithm for minimal-cost
scheduling were to exist, then we could use the same for
energy-efficient scheduling by considering only one source supplying
power at cost 1 per unit energy.

Given the previous results, it also follows that many complex
objectives that mix two or more simpler objectives are likewise
intractable. The reason this so is that it is certainly no easier to
meet an objective when subjected to a constraint, than it is to meet
the objective without the constraint.

While this family of hardness results certainly puts paid to any hopes
of easy solutions to problems of scheduling for energy efficiency and
other desirable measures of effectiveness, it opens up some
interesting possibilities.  One direction we think would be fruitful
is to further analyze the sub-classes of problems which may permit
easy solutions; another would be construct suitable approximation
algorithms to achieve objectives within known bounds of the
optimal. We have pursued this particular avenue to some extent, and
suggest that there is much scope for further work along the same
lines.

\end{document}